\newcommand{\BDTW}{\textsc{BDTW-Mean}\xspace}
\newcommand{\MSS}{\textsc{MSS}\xspace}
\newcommand{\Q}{\mathbb{Q}}
\DeclareMathOperator{\dtw}{dtw}
\newtheorem{observation}[theorem]{Observation}
\crefname{question}{Question}{Questions}
\newcommand{\problemdef}[3]{
	\begin{center}
  \begin{minipage}{0.95\textwidth}
    \noindent
    \normalsize
    \textsc{#1}

			\vspace{2pt}
			\setlength{\tabcolsep}{3pt}
			\begin{tabularx}{\textwidth}{@{}lX@{}}
					\normalsize\textbf{Input:} 		& \normalsize #2 \\
					\normalsize\textbf{Question:} 	& \normalsize #3
				\end{tabularx}
  \end{minipage}
	\end{center}
      }
\newcommand{\optproblemdef}[3]{
	\begin{center}
  \begin{minipage}{0.95\textwidth}
    \noindent
    \normalsize
    \textsc{#1}

			\vspace{2pt}
			\setlength{\tabcolsep}{3pt}
			\begin{tabularx}{\textwidth}{@{}lX@{}}
					\normalsize\textbf{Input:} 		& \normalsize #2 \\
					\normalsize\textbf{Task:} 	& \normalsize #3
				\end{tabularx}
  \end{minipage}
	\end{center}
      }
\title{Faster Binary Mean Computation Under Dynamic Time Warping}
\titlerunning{Fast Binary DTW-Mean} %TODO optional, please use if title is longer than one line
\author{Nathan Schaar}{Technische Universit\"at Berlin, Faculty~IV, %Institute of Software Engineering and Theoretical Computer Science, 
Algorithmics and Computational Complexity }{n.schaar@campus.tu-berlin.de}{}{Partially supported by DFG NI~369/19.}%TODO mandatory, please use full name; only 1 author per \author macro; first two parameters are mandatory, other parameters can be empty. Please provide at least the name of the affiliation and the country. The full address is optional
\author{Vincent Froese}{Technische Universit\"at Berlin, Faculty~IV, %Institute of Software Engineering and Theoretical Computer Science, 
Algorithmics and Computational Complexity}{vincent.froese@tu-berlin.de}{}{}
\author{Rolf Niedermeier}{Technische Universit\"at Berlin, Faculty~IV, %Institute of Software Engineering and Theoretical Computer Science, 
Algorithmics and Computational Complexity}{rolf.niedermeier@tu-berlin.de}{}{}
\authorrunning{N.~Schaar, V.~Froese and R.~Niedermeier} %TODO mandatory. First: Use abbreviated first/middle names. Second (only in severe cases): Use first author plus 'et al.'
\keywords{consensus string problems, time series averaging, minimum 1-separated sum, sparse strings} %TODO mandatory; please add comma-separated list of keywords
\begin{document}

\maketitle

\begin{abstract} 
Many consensus string problems are based on Hamming distance.
We replace Hamming distance by the more flexible (e.g., easily 
coping with different input string lengths) dynamic time warping 
distance, best known from applications in time series mining.
Doing so, we study the problem of finding a mean string 
that minimizes the sum of (squared) dynamic time warping distances 
to a given set of input strings.
While this problem is known to be NP-hard (even for strings over a three-element alphabet), we address the binary alphabet case which is known to be polynomial-time solvable.
We significantly improve on a previously known algorithm in terms of 
worst-case running time.
Moreover, we also show the practical usefulness of one of our algorithms 
in experiments with real-world and synthetic data.
Finally, we identify special cases solvable in linear time
(e.g., finding a mean of only two binary input strings) 
and report some empirical findings concerning combinatorial
properties of optimal means.
\end{abstract}

\newpage

\section{Introduction}
Consensus problems are an integral part of stringology. 
For instance, in the frequently studied \textsc{Closest String} 
problem one is given $k$~strings 
of equal length and the task is to find a center string that minimizes 
the maximum Hamming distance to all $k$~input 
strings. 
\textsc{Closest String} is NP-hard even for binary alphabet~\cite{FL97} 
and has been extensively studied in context of approximation and 
parameterized algorithmics~\cite{BHKN14,CW11,CMW12,CMW16,GNR03,LMW02,NS12,YCMW19}. 
Notably, when one wants to minimize the sum of distances instead of 
the maximum distance, the problem is easily solvable in linear 
time by taking at each position a letter that appears
most frequently in the input strings.

Hamming distance, however, is quite limited in many application contexts; for instance, how to 
define a center string in case of input strings that do not all have the 
same length?  In context of analyzing time series (basically strings where 
the alphabet consists of rational numbers), the  ``more flexible''
\emph{dynamic time warping distance}~\cite{SC78} enjoys high 
popularity and can be computed for two input strings in subquadratic 
time~\cite{GS18,Kuszmaul19}, essentially matching corresponding conditional 
lower bounds~\cite{ABW15,BK15}. Roughly speaking (see Section~\ref{sec:prelim}
for formal definitions and an example),
measuring the dynamic time warping distance (dtw for short) can be seen 
as a two-step process: First, one
aligns one time series with the other (by stretching them via duplication of elements) such that both time series end up with the same length.
Second, one then calculates the Euclidean distance of the aligned
time series (recall that here the alphabet consists 
of numbers). Importantly, restricting to the binary case,
the dtw~distance  of two time series can be 
computed in $O(n^{1.87})$ time~\cite{ABW15}, where $n$ is the maximum time series length (a result that will also be relevant for our work).

With the dtw~distance at hand, the most fundamental consensus 
problem in this (time series) context is, given $k$~input ``strings'' 
(over rational numbers), 
compute a mean string that minimizes the sum of (squared) dtw~distances 
to all input strings. This problem is known as 
\textsc{DTW-Mean} in the literature and only recently 
has been shown to be NP-hard~\cite{BDS19,BFN18}. For the most basic
case, namely binary alphabet (that is, input and output are binary), however, the problem is 
known to be solvable in $O(kn^3)$~time~\cite{BFFJNS19}.
By way of contrast, if one allows the mean to contain any rational 
numbers, then the problem is NP-hard even for binary inputs~\cite{BFN18}.
Moreover, the problem is also NP-hard for ternary input and output~\cite{BDS19}.

Formally, in this work we study the following problem:
\problemdef{Binary DTW-Mean (\BDTW)}
{Binary strings~$s_1,\ldots,s_k$ of length at most~$n$ and~$c\in\Q$.}
{Is there a binary string~$z$ such that~$F(z)\coloneqq\sum_{i=1}^k\dtw(s_i,z)^2\le c$?}
Herein, the $\dtw$ function is formally defined in Section~\ref{sec:prelim}.
The study of the special case of binary data may occur when one deals with 
binary states (e.g., switching between the active and the inactive mode of 
a sensor); binary data were recently studied in the dynamic time warping 
context~\cite{MCAHM16,SDHDKJC18}. Clearly, binary data can always be generated 
from more general data by ``thresholding''.

Our main theoretical result is to show that \BDTW{} can 
be solved in $O(kn^{1.87})$ and $O(k(n+m(m-\mu)))$ time, respectively, where $m$ is the maximum and~$\mu$ is the median condensation length of the input strings (the condensation of a string is obtained by repeatedly removing one of two identical consecutive elements).
While the first algorithm, relies on an intricate ``blackbox-algorithm'' for a
certain number problem from the literature (which so far was never implemented), the second algorithm (which we implemented) is more directly based on combinatorial arguments. 
Anyway, our new bounds improve on the standard 
$O(kn^3)$-time bound~\cite{BFFJNS19}. Moreover, we also experimentally 
tested our second algorithm and compared it to the standard one, clearly 
outperforming it (typically by orders of magnitude) on real-world 
and on synthetic instances.  
Further theoretical results comprise linear-time algorithms for 
special cases (two input strings or three input strings with some 
additional constraints). 
Further empirical results relate to the typical shape of a mean.

\section{Preliminaries}\label{sec:prelim}

%We introduce notation and give some preliminary definitions and results.

%\paragraph*{Notation.}
For~$n\in\mathbb{N}$, let~$[n] := \{1,\ldots,n\}$.
% A string is a finite sequence of characters from an alphabet~$\Sigma$.
We consider binary strings~$x=x[1]x[2]\ldots x[n]\in\{0,1\}^n$.
We denote the length of~$x$ by~$|x|$ and we also denote the last symbol~$x[n]$ of~$x$ by~$x[-1]$.
For~$1\le i \le j \le |x|$, we define the substring~$x[i,j]:=x[i]\ldots x[j]$.
A maximal substring of consecutive 1's (0's) in~$x$ is called a \emph{1-block} (\emph{0-block}).
The $i$-th block of~$x$ (from left to right) is denoted~$x^{(i)}$.
A string $x$ is called \emph{condensed} if no two consecutive elements are equal, that is, every block is of size~1.
The \emph{condensation} of~$x$ is denoted~$\tilde{x}$ and is defined as the condensed string obtained by removing one of two equal consecutive elements of~$x$ until the remaining series is condensed. Note that the condensation length~$|\tilde{x}|$ equals the number of blocks in~$x$.

%\paragraph*{Dynamic Time Warping (DTW).}
The dynamic time warping distance measures the similarity of two strings using non-linear alignments defined via so-called warping paths.
\begin{definition}
  A \emph{warping path} of order~$m\times n$ is a sequence~$p=(p_1,\ldots,p_L)$, $L\in\mathbb{N}$,
  of index pairs $p_\ell=(i_\ell,j_\ell)\in [m]\times[n]$, $1\le \ell \le L$, such that
  \begin{compactenum}[(i)]
    \item $p_1=(1,1)$,
    \item $p_L=(m,n)$, and
    \item $(i_{\ell+1}-i_\ell, j_{\ell+1}-j_\ell)\in \{(1,0),(0,1),(1,1)\}$ for each~$\ell \in [L-1]$.
  \end{compactenum}
\end{definition}

A warping path can be visualized within an $m\times n$ ``warping matrix'' (see \Cref{fig:optwarpbinary}).
The set of all warping paths of order~$m\times n$ is denoted by~$\mathcal{P}_{m,n}$.
A warping path~$p\in\mathcal{P}_{m,n}$ defines an \emph{alignment} between two strings~$x\in \Q^m$ and~$y\in\Q^n$ in the following way:
A pair~$(i,j)\in p$ \emph{aligns} element~$x_i$ with~$y_j$ with a local cost of~$(x_i-y_j)^2$.
The dtw~distance between two strings~$x$ and~$y$ is defined as
$$\dtw(x,y) := \min_{p\in\mathcal{P}_{m,n}}\sqrt{\sum_{(i,j)\in p}(x_i-y_j)^2}.$$
It is computable via standard dynamic programming in~$O(mn)$ time\footnote{Throughout this work, we assume that all arithmetic operations can be carried out in constant time.}~\cite{SC78}, with recent theoretical improvements to subquadratic time~\cite{GS18,Kuszmaul19}.

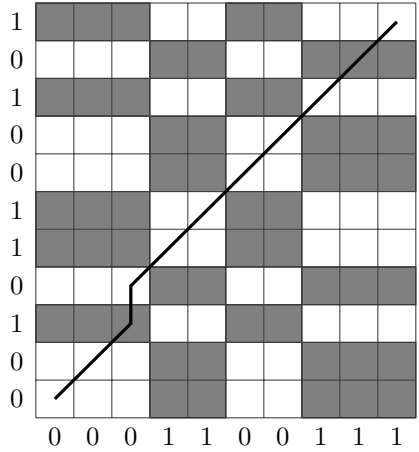
\begin{figure}[t]
    \centering
    \begin{tikzpicture}[scale=.5]

      \draw[fill=black!50] (3,0) rectangle (5,2);
      \draw[fill=black!50] (7,0) rectangle (10,2);

      \draw[fill=black!50] (3,3) rectangle (5,4);
      \draw[fill=black!50] (7,3) rectangle (10,4);
      
      \draw[fill=black!50] (3,6) rectangle (5,8);
      \draw[fill=black!50] (7,6) rectangle (10,8);

      \draw[fill=black!50] (3,9) rectangle (5,10);
      \draw[fill=black!50] (7,9) rectangle (10,10);

      \draw[fill=black!50] (0,2) rectangle (3,3);
      \draw[fill=black!50] (5,2) rectangle (7,3);

      \draw[fill=black!50] (0,4) rectangle (3,6);
      \draw[fill=black!50] (5,4) rectangle (7,6);
      
      \draw[fill=black!50] (0,8) rectangle (3,9);
      \draw[fill=black!50] (5,8) rectangle (7,9);

      \draw[fill=black!50] (0,10) rectangle (3,11);
      \draw[fill=black!50] (5,10) rectangle (7,11);
  
      \draw[black!75] (0,0) grid (10,11);
      \draw[very thick] (0.5,0.5) -- (2.5,2.5) -- (2.5,3.5) -- (9.5,10.5);
      \node at (0.5,-0.5) {0};
      \node at (1.5,-0.5) {0};
      \node at (2.5,-0.5) {0};
      \node at (3.5,-0.5) {1};
      \node at (4.5,-0.5) {1};
      \node at (5.5,-0.5) {0};
      \node at (6.5,-0.5) {0};
      \node at (7.5,-0.5) {1};
      \node at (8.5,-0.5) {1};
      \node at (9.5,-0.5) {1};

      \node at (-0.5,0.5) {0};
      \node at (-0.5,1.5) {0};
      \node at (-0.5,2.5) {1};
      \node at (-0.5,3.5) {0};
      \node at (-0.5,4.5) {1};
      \node at (-0.5,5.5) {1};
      \node at (-0.5,6.5) {0};
      \node at (-0.5,7.5) {0};
      \node at (-0.5,8.5) {1};
      \node at (-0.5,9.5) {0};
      \node at (-0.5,10.5) {1};
      
    \end{tikzpicture}
    \caption{An optimal warping path for the strings~$x=00101100101$ (vertical axis) and $y=0001100111$ (horizontal axis).
      Black cells have local cost 1.
      The string~$x$ consists of eight blocks with sizes 2,1,1,2,2,1,1,1 and~$y$ consists of four blocks with sizes 3,2,2,3. An optimal warping path has to pass through $(8-4)/2 =2$ non-neighboring blocks of the six inner blocks of~$x$.
    }
    \label{fig:optwarpbinary}
  \end{figure}

\section{DTW on Binary Strings}\label{sec:bin-dtw}

We briefly discuss some known results about the dtw~distance between binary strings since these will be crucial for our algorithms for \BDTW.

\citet[Section~5]{ABW15} showed that the dtw~distance of two binary strings of length at most~$n$ can be computed in~$O(n^{1.87})$ time.
They obtained this result by reducing the dtw~distance computation to the following integer problem.

\optproblemdef{Min 1-Separated Sum (\MSS)}
{A sequence $(b_1,\ldots,b_m)$ of~$m$ positive integers and an integer $r\ge 0$.}
{Select $r$ integers~$b_{i_1},\ldots,b_{i_r}$ with $1 \le i_1 < i_2 < \dots < i_r \le m$ and
  $i_j < i_{j+1}-1$ for all~$1\le j < r$ such that $\sum_{j=1}^rb_{i_j}$ is minimized.}
\noindent
The integers of the \MSS instance correspond to the block sizes of the input string which contains more blocks.

\begin{theorem}[{\cite[Theorem~8]{ABW15}}]\label{thm:binDTW}
  Let~$x\in\{0,1\}^m$ and $y\in\{0,1\}^n$ be two binary strings such that~$x[1]=y[1]$, $x[m]=y[n]$, and $|\tilde{x}| \ge |\tilde{y}|$.
  Then, $\dtw(x,y)^2$ equals the sum of a solution for the \MSS instance $\left((|x^{(2)}|,\ldots,|x^{(|\tilde{x}|-1)}|),(|\tilde{x}|-|\tilde{y}|)/2\right)$.
\end{theorem}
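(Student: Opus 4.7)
The plan is to prove the equality by establishing both inequalities separately. For the $\le$ direction, given an MSS solution $i_1<\cdots<i_r$ of non-adjacent inner indices, I build a warping path of cost $\sum_j|x^{(i_j)}|$. Partition the block indices $[1,|\tilde x|]$ into $|\tilde y|$ consecutive groups, one per $y$-block: each maximal run $i_j,i_j+2,\dots,i_j+2(\ell-1)$ of selected indices at gap $2$ together with the surrounding good blocks becomes a single group of odd size $2\ell+1$, and the remaining block indices form singleton groups. Each group has odd size and begins with a block of matching $y$-symbol, so only the $i_j$ are mismatched. I realize the alignment at cell level by a sub-warping-path inside each group's rectangle against its $y$-block: every row of a mismatched block is routed through one fixed column, while the surrounding matching-block rows implement all needed column shifts. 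The total cost is exactly $\sum_j|x^{(i_j)}|$.

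For the $\ge$ direction, let $p$ be any warping path and define $X_b\subseteq[m]$ to be the contiguous range of $x$-indices aligned with some position in $y^{(b)}$. An exchange argument first reduces to the case where every $x$-block lies entirely within a single $X_b$: if $x^{(a)}$ of symbol $s$ spans $X_b$ and $X_{b+1}$, then exactly one of $v_b,v_{b+1}$ equals $s$, and a local modification of $p$ shifting the boundary between $X_b$ and $X_{b+1}$ pushes the part of $x^{(a)}$ on the mismatching side onto the matching side, weakly decreasing cost while preserving monotonicity. After this reduction $p$ induces a consecutive partition of $[1,|\tilde x|]$ into $|\tilde y|$ nonempty groups $G_1,\dots,G_{|\tilde y|}$, and because the symbols of $x$-blocks and $y$-blocks both alternate starting from a common initial value, a block $a\in G_b$ contributes $|x^{(a)}|$ to the cost iff $a\not\equiv b\pmod 2$.

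The combinatorial heart is to show that this total mass of mismatched blocks is at least the MSS value. I reparameterize via the staircase $k_a:=a-b(a)$, where $b(a)$ denotes the group containing block $a$: it starts at $k_1=0$, ends at $k_{|\tilde x|}=D:=|\tilde x|-|\tilde y|$, and steps by $0$ or $1$ at each transition, so it attains every integer in $[0,D]$. Block $a$ is mismatched iff $k_a$ is odd. The $r=D/2$ odd values $1,3,\dots,D-1$ therefore correspond to $r$ nonempty contiguous sets $I_{2j-1}$ of block indices lying entirely in $[2,|\tilde x|-1]$ (since $k_1=0$ and $k_{|\tilde x|}=D$ are even). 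Picking from each $I_{2j-1}$ a representative $i_j$ minimizing $|x^{(a)}|$ yields indices $i_1<\cdots<i_r$ satisfying $i_{j+1}\ge i_j+2$, because the nonempty $I_{2j}$ lies strictly between them. Thus $(i_j)_j$ is a feasible MSS selection, and the total mismatched mass is at least $\sum_j|x^{(i_j)}|\ge$ the MSS optimum.

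The main obstacle is the exchange reduction in the lower bound: one has to describe the local warping-path modification explicitly and check both that it respects monotonicity and that the global cost change is non-positive. Once this reduction is secure, the staircase viewpoint $a\mapsto k_a$ makes the correspondence with MSS immediate in both directions.
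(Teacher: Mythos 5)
The paper does not prove this statement at all: it is imported verbatim as Theorem~8 of \citet{ABW15}, and the only thing the paper offers is the one-sentence intuition following it (exactly $(|\tilde{x}|-|\tilde{y}|)/2$ non-neighboring blocks of $x$ must be misaligned). So your attempt has to be judged on its own merits rather than against a proof in this paper.

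Your upper-bound construction is correct (grouping each maximal gap-$2$ run of selected indices together with the interleaved and flanking matching blocks gives odd-sized groups, the parity bookkeeping works out to exactly $|\tilde{y}|$ groups with correct starting symbols, and the cost is realized), and the staircase $k_a=a-b(a)$ is a clean way to extract a feasible \MSS{} selection from a consecutive block partition. The genuine gap is exactly where you flagged it, and it is not merely a matter of writing out details: the exchange step is false as stated. ``Pushing the part of $x^{(a)}$ on the mismatching side onto the matching side'' can empty a group and destroy the warping path. Concretely, for $x=y=010$ and the path $(1,1),(1,2),(1,3),(2,3),(3,3)$, the set $X_2$ is the single row $1$, which lies inside the first $x$-block, and that block mismatches $y^{(2)}$; evicting it leaves column $2$ unaligned. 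Whenever one $x$-block engulfs all rows aligned with some $y^{(b)}$ it mismatches, no consecutive partition of the $x$-blocks into $|\tilde{y}|$ nonempty groups is induced, and the staircase has nothing to run on. A repair that needs no path surgery: call $x^{(a)}$ \emph{saved} if some row of it is aligned with a column of a $y$-block of the same symbol. Choosing one witness $y$-block per saved $x$-block gives a strictly increasing map (weakly increasing by monotonicity of the path, injective because consecutive $x$-blocks have distinct symbols), so at most $|\tilde{y}|$ blocks are saved. The at least $|\tilde{x}|-|\tilde{y}|=2r$ unsaved blocks each contribute at least their full size to the cost, they all lie in $\{2,\ldots,|\tilde{x}|-1\}$ (the first and last blocks are saved because $x[1]=y[1]$ and $x[m]=y[n]$), and taking every other unsaved block in sorted order yields a feasible selection of $r$ pairwise non-adjacent indices whose weight is at most the total unsaved weight, giving cost $\ge$ the \MSS{} value.
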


The idea behind \Cref{thm:binDTW} is that exactly $(m'-n')/2$ non-neighboring blocks of~$x$ are misaligned in any warping path (note that $m'-n'$ is even since~$x$ and~$y$ start and end with the same symbol).
An optimal warping path can thus be obtained from minimizing the sum of block sizes of these misaligned blocks. For example, in \Cref{fig:optwarpbinary} the dtw~distance corresponds to a solution of the \MSS instance~$((1,1,2,2,1,1),2)$.

\citet[Theorem~10]{ABW15} showed how to solve~\MSS in~$O(n^{1.87})$ time, where~$n = \sum_{i=1}^mb_i$.
They gave a recursive algorithm that, on input~$\left((b_1,\ldots,b_{m}),r\right)$, outputs four lists~$C^{00}, C^{0*}, C^{*0}$, and $C^{**}$, where, for~$t\in\{0,\ldots,r\}$,
\begin{itemize}
\item $C^{**}[t]$ is the sum of a solution for the \MSS instance $\left((b_1,\ldots,b_{m}),t\right)$,
\item $C^{0*}[t]$ is the sum of a solution for the \MSS instance $\left((b_2,\ldots,b_{m}),t\right)$,
\item $C^{*0}[t]$ is the sum of a solution for the \MSS instance $\left((b_1,\ldots,b_{m-1}),t\right)$, and
\item $C^{00}[t]$ is the sum of a solution for the \MSS instance $\left((b_2,\ldots,b_{m-2}),t\right)$.
\end{itemize}

\noindent
Note that~$C^{**}[r]$ yields the solution.
We will make use of their algorithm when solving \BDTW.
We will also use the following simple dynamic programming algorithm for \MSS which is faster for large input integers.

\begin{lemma}\label{lem:MSSdp}
  \textsc{Min 1-Separated Sum} is solvable in~$O(mr)$ time.
\end{lemma}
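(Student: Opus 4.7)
The plan is a standard two-dimensional dynamic programming over the prefix length and the number of selected integers. For $i\in\{0,\ldots,m\}$ and $t\in\{0,\ldots,r\}$, I would define $T[i][t]$ to be the minimum sum of a feasible selection of exactly $t$ integers from the prefix $(b_1,\ldots,b_i)$, subject to the 1-separation constraint $i_{j+1}>i_j+1$. The final answer is then $T[m][r]$.

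The recurrence captures the binary decision of whether $b_i$ is included. If $b_i$ is not selected, the optimum on the prefix of length $i$ with $t$ items equals $T[i-1][t]$. If $b_i$ is selected, then by the separation constraint $b_{i-1}$ cannot be selected, so the remaining $t-1$ items must be chosen from the prefix of length $i-2$, contributing $T[i-2][t-1]+b_i$. Hence for $t\ge 1$ and $i\ge 1$,
\[
T[i][t] = \min\bigl(T[i-1][t],\; T[i-2][t-1]+b_i\bigr),
\]
with base cases $T[i][0]=0$ for all $i\ge 0$, $T[i][t]=+\infty$ whenever a feasible selection does not exist (in particular for $i<2t-1$), and the convention $T[-1][t]=+\infty$ for $t\ge 1$ to handle the boundary $i=1$.

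Correctness is a routine induction on $i$: the feasibility of the two options follows directly from the separation constraint, and any feasible selection either uses $b_i$ (falling into the second case) or does not (falling into the first), so the minimum is attained. For the running time, there are $(m+1)(r+1)=O(mr)$ table entries, each computed in constant time by one addition and one comparison; the table can be filled in row-major order, yielding total time $O(mr)$.

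There is no real obstacle here; the only point worth a moment's care is the boundary handling, in particular the $i=1$ case, where selecting $b_1$ with $t=1$ must be allowed and so we need $T[-1][0]=0$ (or, equivalently, treat the empty prefix as admitting the empty selection). Once this is set up, the recurrence and the time bound follow immediately.
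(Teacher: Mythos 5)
Your proposal is correct and matches the paper's own proof: both use the same two-dimensional dynamic program over prefix length and number of selected integers with the recurrence $\min\bigl(T[i-1][t],\,T[i-2][t-1]+b_i\bigr)$, differing only in cosmetic details of how the base cases and infeasible entries are handled (the paper restricts $j\le\lceil i/2\rceil$ and seeds $M[i,1]=\min\{b_1,\ldots,b_i\}$, whereas you use $+\infty$ sentinels). Nothing further is needed.
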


\begin{proof}
  Let~$\left((b_1,\ldots,b_m),r\right)$ be an \MSS instance.
  We define a dynamic programming table~$M$ as follows:
  For each $i\in [m]$ and each~$j\in\{0,\ldots,\min(r,\lceil i/2 \rceil)\}$, $M[i,j]$ is the sum of a solution of the subinstance~$\left((b_1,\ldots,b_i),j\right)$.
  Clearly, it holds $M[i,0]=0$ and~$M[i,1]=\min\{b_1,\ldots,b_i\}$ for all~$i$.
  Further, it holds $M[3,2]=b_1+b_3$.
  For all~$i\in\{4,\ldots,m\}$ and~$j\in\{2,\ldots,\min(r,\lceil i/2\rceil)\}$, the following recursion holds
  \[M[i,j] = \min(b_i + M[i-2,j-1], M[i-1,j]).\]
  Hence, the table~$M$ can be computed in~$O(mr)$ time.
\end{proof}

Note that the above algorithms only compute the dtw~distance between binary strings with equal starting and ending symbol.
However, it is an easy observation that the dtw~distance of arbitrary binary strings can recursively be obtained from this via case distinction on which first and/or which last block to misalign. % (see \Cref{fig:obs}).

\begin{observation}[{\cite[Claim~6]{ABW15}}]\label{obs:wlog}
  Let~$x\in\{0,1\}^m$, $y\in\{0,1\}^n$ with $m':=|\tilde{x}|\ge n':=|\tilde{y}|$.
  Further, let~$a:=|x^{(1)}|$, $a':=|x^{(m')}|$, $b:=|y^{(1)}|$, and  $b':=|y^{(n')}|$.
  The following holds:
  \begin{compactitem}
    \item If $x[1]\neq y[1]$, then
      \[\dtw(x,y)^2 = \begin{cases}
          \max(a,b), &m'=n'=1\\
          a +\dtw(x[a+1,m],y)^2, &m' > n'=1\\
          \min\left(a+\dtw(x[a+1,m],y)^2, b+\dtw(x,y[b+1,n])^2\right), &n'>1
        \end{cases}
          .\]
    \item If $x[1]=y[1]$ and $x[m]\neq y[n]$, then
      \[\dtw(x,y)^2 = \begin{cases}
          a'+\dtw(x[1,m-a'],y)^2, &n'=1\\
          \min\left(a'+\dtw(x[1,m-a'],y)^2, b'+\dtw(x,y[1,n-b'])^2\right), &n'>1
        \end{cases}.\]
  \end{compactitem}
\end{observation}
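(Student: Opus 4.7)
The plan is to prove each formula by matching upper and lower bounds on $\dtw(x,y)^2$, working directly with warping paths. A pivotal fact used throughout is that when $x[1]\neq y[1]$ every cell in the top-left ``bad region'' $B:=[1,a]\times[1,b]$ has local cost exactly~$1$, because $x[1,a]$ and $y[1,b]$ are constant blocks of opposite values; symmetrically, when $x[1]=y[1]$ but $x[m]\neq y[n]$, every cell of the corresponding bottom-right bad region costs~$1$.

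For the upper bounds I would construct explicit paths. To realise $a+\dtw(x[a+1,m],y)^2$ in the case $x[1]\neq y[1]$, take the warping path that walks down column~$1$ from $(1,1)$ to $(a,1)$ and then continues with an optimal path for $(x[a+1,m],y)$ in which every row index has been increased by~$a$; its first $a$ cells each cost~$1$ and the shifted tail contributes exactly $\dtw(x[a+1,m],y)^2$. The dual construction (walking right along row~$1$ for $b$ steps first) yields $b+\dtw(x,y[b+1,n])^2$. The branch $m'>n'=1$ uses the same vertical construction together with the elementary observation that $\dtw(x[a+1,m],y)$ is insensitive to the length of a constant~$y$, so the second option is vacuous. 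The degenerate branch $m'=n'=1$ follows because every cell has cost~$1$ and the shortest warping path has length $\max(m,n)=\max(a,b)$.

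For the lower bound when $n'>1$, fix an optimal path~$p$ and let $(i^\star,j^\star)$ be its first cell outside $B$. Since warping paths only move down, right, or diagonally, exactly one of three situations occurs: a vertical exit with $i^\star=a+1$ and $j^\star\le b$, a horizontal exit with $j^\star=b+1$ and $i^\star\le a$, or a diagonal exit from $(a,b)$ to $(a+1,b+1)$. In these three cases the portion of $p$ inside $B$ contains at least $a$, $b$, or $\max(a,b)$ cells respectively, each of cost~$1$, and the sub-path from $(i^\star,j^\star)$ to $(m,n)$ realises an alignment whose cost is at least the dtw-distance between the corresponding suffixes. The critical monotonicity step is to argue, say in the vertical-exit case, that $\dtw(x[a+1,m],y[j^\star,n])\ge\dtw(x[a+1,m],y)$: given any optimal warping path for the shorter right-hand side, prepend to it the horizontal segment $(1,1),(1,2),\ldots,(1,j^\star-1)$, which aligns $x[a+1]$ against $y[1],\ldots,y[j^\star-1]$ at cost~$0$ because these positions all lie inside the first block of $y$ and $x[a+1]=\neg x[1]=y[1]$ in binary. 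Analogous padding arguments handle the horizontal and diagonal exits, and together they give $\dtw(x,y)^2\ge\min(T_1,T_2)$ where $T_1$ and $T_2$ denote the two terms of the claimed minimum.

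The second item ($x[1]=y[1]$, $x[m]\neq y[n]$) is proved by a mirror-image analysis applied to the bottom-right corner of the warping matrix rather than the top-left, using the symmetric ``bad region'' there. The main obstacle throughout is the monotonicity step; it relies delicately on the binary structure to ensure that the discarded prefix (or suffix) consists of a single constant symbol matching the start (or end) of the other string, so that the padding cells really do all cost~$0$. The remaining case distinctions are routine.
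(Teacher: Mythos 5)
The paper itself gives no proof of this observation: it is imported from \citet[Claim~6]{ABW15}, accompanied only by the one-line intuition that the distance is obtained ``via case distinction on which first and/or which last block to misalign.'' Your argument is a correct, self-contained proof of precisely that intuition, so there is nothing in the paper to diverge from. The structure is sound: the upper bounds via concatenating a column (or row) of cost-$1$ cells with a shifted optimal sub-path, the lower bound via the first cell at which an optimal path leaves the constant-cost corner region $B=[1,a]\times[1,b]$ (the three exit types are exhaustive, since the step set forces $i^\star=a+1$ or $j^\star=b+1$), and the padding/monotonicity step $\dtw(x[a+1,m],y[j^\star,n])\ge\dtw(x[a+1,m],y)$, which correctly uses that the discarded prefix of $y$ lies inside its first block and that $x[a+1]=\neg x[1]=y[1]$ over a binary alphabet. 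Two minor points, neither affecting correctness: in the branch $m'>n'=1$ the cleaner reason the second option disappears is that column $b+1=n+1$ does not exist (the $y$-side recursion would have to delete all of $y$), not that $\dtw$ is ``insensitive to the length of a constant $y$''; and in the diagonal-exit case you should state explicitly that you apply the padding argument on whichever side attains $\max(a,b)$, so that the bound lands on the corresponding term of the claimed minimum.
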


For condensed strings, \citet{BFFJNS19} derived the following useful closed form for the dtw~distance (which basically follows from \Cref{obs:wlog,thm:binDTW}).

\begin{lemma}[{\cite[Lemma~1 and~2]{BFFJNS19}}]\label{lem:dtwdist}
  For a condensed binary string~$x$ and a binary string~$y$ with~$|\tilde{y}| \le |x|$, it holds that
  \[\dtw(x,y)^2 =
    \begin{cases}
      \lceil(|x| - |\tilde{y}|)/2\rceil, & x_1=y_1\\
      2, & x_1\neq y_1 \wedge |x|=|\tilde{y}|\\
      1+\lfloor (|x| - |\tilde{y}|)/2\rfloor, & x_1\neq y_1 \wedge |x|>|\tilde{y}|
    \end{cases}.\]
\end{lemma}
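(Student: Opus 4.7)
The plan is to prove the formula by case distinction on whether $x_1 = y_1$ and whether $x_{-1} = y_{-1}$, reducing every situation to the matched-endpoints setting of \Cref{thm:binDTW} by iteratively peeling off end blocks via \Cref{obs:wlog}. The key simplification is that, since $x$ is condensed, all block sizes of~$x$ equal~$1$, so any \MSS instance arising from \Cref{thm:binDTW} consists of a sequence of~$1$'s; its optimum with parameter~$r$ is simply~$r$, provided the $1$-separation constraint is satisfiable (which holds since there are $|x| - 2 \ge 2r - 1$ available entries whenever $|\tilde y| \ge 1$).

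First I would handle $x_1 = y_1$. If additionally $x_{-1} = y_{-1}$, which is exactly the sub-case where $|x| - |\tilde y|$ is even, then \Cref{thm:binDTW} together with the simplification above immediately yields $\dtw(x,y)^2 = (|x| - |\tilde y|)/2 = \lceil (|x|-|\tilde y|)/2 \rceil$. If instead $x_{-1} \neq y_{-1}$, so that $|x| - |\tilde y|$ is odd, I would apply the second part of \Cref{obs:wlog}, peeling off either the last block of~$x$ (contributing $|x^{(|x|)}| = 1$) or the last block of~$y$ (contributing $b' = |y^{(|\tilde y|)}| \ge 1$). Both resulting subproblems satisfy the matched-endpoints condition, so the earlier sub-case gives them value $(|x| - |\tilde y| - 1)/2$; taking the minimum yields $1 + (|x|-|\tilde y|-1)/2 = \lceil (|x|-|\tilde y|)/2 \rceil$.

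For $x_1 \neq y_1$, I would apply the first part of \Cref{obs:wlog} analogously, peeling off the first block of~$x$ (contributing~$1$) or the first block of~$y$ (contributing $b = |y^{(1)}| \ge 1$), so that the subproblem falls into the previous case. A short calculation then separates the two remaining branches: when $|x| > |\tilde y|$, taking the minimum over the two peel options and plugging in the $x_1 = y_1$ formula gives $1 + \lfloor (|x|-|\tilde y|)/2 \rfloor$; when $|x| = |\tilde y|$, each peel option forces a second peel from the opposite side, contributing~$1$ at each end and recovering $\dtw(x,y)^2 = 2$. The main obstacle will be carefully tracking the floor/ceiling arithmetic across two layers of recursion and verifying that the base cases of \Cref{obs:wlog} (in particular $|\tilde y| = 1$ or $m' = n' = 1$) line up with the claimed three-branch formula; everything else reduces to routine bookkeeping once the condensed-$x$ trivialization of \MSS is in place.
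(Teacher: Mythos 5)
The paper does not actually prove this lemma: it cites \cite{BFFJNS19} and only remarks that the formula ``basically follows from \Cref{obs:wlog,thm:binDTW}'', so your plan---peel end blocks via \Cref{obs:wlog} until the endpoints match, then trivialize the resulting \MSS instances because all blocks of the condensed string~$x$ have size~$1$---is exactly the intended route. Your first two cases go through, with one small slip: after peeling the last block of~$y$, the subproblem has value $(|x|-|\tilde y|+1)/2$ rather than $(|x|-|\tilde y|-1)/2$ (the condensation length of~$y$ drops by one), but since that branch costs $b'\ge 1$ plus this \emph{larger} value, the minimum is still attained by peeling~$x$, and the claimed $\lceil(|x|-|\tilde y|)/2\rceil$ follows.

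The genuine gap is the branch $x_1\neq y_1$ and $|x|=|\tilde y|$. Your claim that the two forced peels ``contribute $1$ at each end'' is wrong: at each end only one of the two peel options removes a size-$1$ block of~$x$; the other removes a block of~$y$, which costs $|y^{(1)}|$ resp.\ $|y^{(|\tilde y|)}|$ and need not be~$1$. Moreover, the branch that peels~$x$ at both ends leaves $x$ with $|x|-2$ blocks against the $|x|$ blocks of~$y$, so the roles flip and \Cref{thm:binDTW} now yields an \MSS instance over the \emph{inner blocks of~$y$} with $r=1$---your blanket claim that every arising \MSS instance is a sequence of $1$'s fails precisely here. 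Carrying your recursion out carefully in this branch gives $\min\left(|y^{(1)}|+1,\ |y^{(|\tilde y|)}|+1,\ 2+\min_{j}|y^{(j)}|\right)$ (the last term over inner blocks~$j$ of~$y$, present only if $|\tilde y|\ge 3$), which equals~$2$ only when a boundary block of~$y$ has size~$1$; for instance $\dtw(01,1100)^2=3$, and the degenerate base case $m'=n'=1$ of \Cref{obs:wlog} gives $\max(1,|y|)=|y|$. So in this branch the stated value~$2$ cannot be derived for arbitrary~$y$; an additional hypothesis (e.g., $y$ condensed, or a size-one first or last block) is needed, and a careful execution of your own strategy is exactly what exposes this. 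Flagging and handling that precondition is the missing step in your sketch.
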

\noindent
According to~\Cref{lem:dtwdist}, one can compute the dtw~distance in constant time when the condensation lengths of the inputs are known and the string with longer condensation length is condensed. 

Our key lemma now states that the dtw~distances between an arbitrary fixed string and all condensed strings of shorter condensation length can also be computed efficiently.

\begin{lemma}\label{lem:DTWdp}
  Let~$s\in\{0,1\}^n$ with $\ell\coloneqq |\tilde{s}|$.
  Given~$\ell$ and the block sizes $b_1,\ldots,b_\ell$ of~$s$, the dtw~distances between~$s$ and all condensed strings of lengths~$\ell', \ldots, \ell$ for some given $\ell'\le \ell$ can be computed in
  \begin{compactenum}[(i)]
    \item $O(n^{1.87})$ time and in
    \item $O(\ell (\ell -\ell'))$ time, respectively.
  \end{compactenum}
\end{lemma}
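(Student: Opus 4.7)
The plan is to express each $\dtw(s,z)^2$, for a condensed binary $z$ of length $\ell''\in\{\ell',\ldots,\ell\}$, as the minimum of a constant number of \MSS values on sub-subsequences of $(b_1,\ldots,b_\ell)$, and then to compute all needed \MSS values in one sweep---via the ABW15 algorithm for part~(i) and via the DP of \Cref{lem:MSSdp} for part~(ii). For each length there are exactly two condensed strings (one per starting symbol), so the total number of dtw values to output is $2(\ell-\ell'+1)$.

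Fix a pair $(s,z)$. Depending on whether the first and the last symbol of~$z$ agree with those of~$s$, four cases arise. When both endpoints match, \Cref{thm:binDTW} directly yields $\dtw(s,z)^2$ as the \MSS value on $(b_2,\ldots,b_{\ell-1})$ with parameter $(\ell-\ell'')/2$. Otherwise, one or two applications of \Cref{obs:wlog} reduce to this matching case: each application peels either a leading/trailing block off~$s$ (contributing $b_1$ or $b_\ell$) or a unit symbol off~$z$ (contributing~$1$, since $z$ is condensed), and after at most two peels the endpoints agree. Consequently, $\dtw(s,z)^2$ equals the minimum of at most four expressions of the form ``constant~$+$~\MSS value on a subsequence $(b_i,\ldots,b_j)$'' with $i\in\{2,3\}$, $j\in\{\ell-2,\ell-1\}$, and parameter at most $(\ell-\ell')/2+1$. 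Crucially, these four candidate subsequences are exactly the four prefix/suffix-excluded variants of the ``middle'' sequence $(b_2,\ldots,b_{\ell-1})$, which are precisely the four output lists $C^{**},C^{0*},C^{*0},C^{00}$ produced by the ABW15 algorithm when invoked on $(b_2,\ldots,b_{\ell-1})$.

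For part~(i), a single call to the ABW15 algorithm on $(b_2,\ldots,b_{\ell-1})$ produces all four lists (indexed up to parameter $(\ell-\ell')/2+1$) in $O(n^{1.87})$ time, after which each of the $2(\ell-\ell'+1)$ distances is assembled in $O(1)$ time from $O(1)$ table lookups. For part~(ii), I replace this call by two invocations of the dynamic program of \Cref{lem:MSSdp}: one on $(b_2,\ldots,b_{\ell-1})$, whose table simultaneously yields $C^{**}$ in row $\ell-2$ and $C^{*0}$ in row $\ell-3$, and one on $(b_3,\ldots,b_{\ell-1})$, which yields $C^{0*}$ and $C^{00}$ analogously. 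Each DP runs in $O(\ell\cdot(\ell-\ell'))$ time by \Cref{lem:MSSdp}, and the subsequent assembly step again costs only $O(\ell-\ell')$.

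The main obstacle is the ``both endpoints mismatch'' case: here \Cref{obs:wlog} has to be unrolled twice, producing up to four candidate expressions whose minimum one must verify is indeed $\dtw(s,z)^2$ (for example by induction on $\ell-\ell''$, using \Cref{thm:binDTW} at the base). Additional boundary care is required at $\ell''\in\{1,\ell\}$: in those cases some of the four branches degenerate---the target sub-subsequence becomes empty, or the prescribed number of non-neighboring positions exceeds what is available---and such branches must be discarded before the minimum is taken.
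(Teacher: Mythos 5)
Your proposal is correct and follows essentially the same route as the paper: reduce each distance via \Cref{obs:wlog} and \Cref{thm:binDTW} to \MSS{} on the four prefix/suffix-excluded variants of the inner block sequence $(b_2,\ldots,b_{\ell-1})$, obtain all of these at once from a single call to the algorithm of \citet{ABW15} for part~(i) and from the dynamic program of \Cref{lem:MSSdp} for part~(ii), and assemble the $O(\ell-\ell')$ distances in constant time each. Your observation that two DP runs suffice (reading off the $*0$ variants from earlier rows of the prefix table) is a minor refinement over the paper's four separate tables, but does not change the asymptotic bounds or the structure of the argument.
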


\begin{proof}
  Let~$x$ be a condensed string of length~$i\in\{\ell',\ldots,\ell\}$.
  \Cref{obs:wlog,thm:binDTW} imply that we essentially have to solve \MSS on four different subsequences of block sizes of~$s$ (depending on the first and last symbol of~$x$) in order to compute~$\dtw(s,x)$.  
  Namely, the four cases are $(b_2,\ldots,b_{\ell-1})$, $(b_3,\ldots,b_{\ell-1})$, $(b_2,\ldots,b_{\ell-2})$, and $(b_3,\ldots,b_{\ell-2})$. Let~$r\coloneqq \lceil(\ell-\ell')/2\rceil$

  $(i)$ We run the algorithm of \citet[Theorem~10]{ABW15} on the instance $\left((b_2,\ldots,b_{\ell-1}),r\right)$ to obtain in~$O(n^{1.87})$ time the four
  lists~$C^{\alpha\beta}$, for $\alpha,\beta\in\{0,1\}$, where~$C^{\alpha\beta}$ contains the solutions of~$\left((b_{2+\alpha},\ldots,b_{\ell-1-\beta}),r'\right)$ for all~$r'\in\{0,\ldots,r\}$.
  From these four lists, we can compute the requested dtw~distances (using \Cref{obs:wlog}) in~$O(\ell)$ time.
  
  $(ii)$ We compute the solutions of the four above \MSS instances using \Cref{lem:MSSdp}.
  For each $\alpha,\beta\in\{0,1\}$, let~$M^{\alpha\beta}$ be the dynamic programming table computed in~$O(\ell(\ell-\ell'))$ time for the instance~$\left((b_{2+\alpha},\ldots,b_{\ell-1-\beta}), r\right)$.
  Again, we can compute the requested dtw~distances from these four tables in~$O(\ell)$ time (using \Cref{obs:wlog}).
\end{proof}

\section{More Efficient Solution of \BDTW}\label{sec:main}

\citet{BFFJNS19} gave an $O(kn^3)$-time algorithm for \BDTW.
The result is based on showing that there always exists a condensed mean of length at most~$n+1$.
Thus, there are~$2(n+1)$ candidate strings to check. For each candidate, one can compute the dtw~distance to every input string in~$O(kn^2)$ time.
It is actually enough to only compute the dtw~distance for the two length-$(n+1)$ candidates to all~$k$ input strings since the resulting dynamic programming tables also yield all the distances to shorter candidates. That is, the running time can actually be bounded in~$O(kn^2)$.

We now give an improved algorithm.
To this end, we first show the following improved bounds on the (condensation) length of a mean.

\begin{lemma}\label{lem:bounds}
  Let~$s_1,\ldots,s_k$ be binary strings with~$|\tilde{s}_1| \le \dots \le |\tilde{s}_k|$ and let~$z$
  be a mean. Then, it holds~$\mu-2 \le |\tilde{z}| \le m+1$, where $\mu=|\tilde{s}_{\lceil k/2 \rceil}|$ is the median condensation length and~$m = |\tilde{s}_k|$ is the maximum condensation length.
\end{lemma}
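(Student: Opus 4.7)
The plan is to prove both bounds by contradiction, in each case modifying $\tilde{z}$ into a strictly cheaper condensed candidate. A preliminary invariance, which follows from the fact that the formulas in \Cref{thm:binDTW,obs:wlog} depend on their inputs only through block sizes and condensation lengths rather than on repeated symbols, is $\dtw(z,s_i) = \dtw(\tilde{z},s_i)$ for every~$i$, so $F(z) = F(\tilde{z})$. Hence it suffices to exhibit a condensed~$\tilde{z}^*$ with $F(\tilde{z}^*) < F(\tilde{z})$.

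For the upper bound $|\tilde{z}| \le m+1$, I would assume $|\tilde{z}| \ge m+2$ and let $\tilde{z}'$ be the condensed string of length $|\tilde{z}|-2$ sharing the first symbol of $\tilde{z}$. Because $|\tilde{z}'| \ge m \ge |\tilde{s}_i|$ for every~$i$, \Cref{lem:dtwdist} yields closed forms for both $\dtw(\tilde{z},s_i)^2$ and $\dtw(\tilde{z}',s_i)^2$. A direct case analysis across the three branches of \Cref{lem:dtwdist} shows $\dtw(\tilde{z},s_i)^2 - \dtw(\tilde{z}',s_i)^2 = 1$ except in the lone degenerate case $s_i[1] \neq \tilde{z}[1]$ together with $|\tilde{s}_i| = |\tilde{z}|-2$, where the difference is~$0$. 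If any input avoids this degenerate case, $F(\tilde{z}') < F(\tilde{z}) = F(z)$ already contradicts optimality. Otherwise every input satisfies $|\tilde{s}_i| = m$ (forcing $|\tilde{z}| = m+2$) and $s_i[1] = \sigma \neq \tilde{z}[1]$; then the condensed string $\tilde{z}''$ of length~$m$ starting with $\sigma$ achieves $\dtw(\tilde{z}'',s_i)^2 = 0$ by \Cref{lem:dtwdist}, improving by~$2$ per input over the value~$2$ attained by~$\tilde{z}$.

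For the lower bound $|\tilde{z}| \ge \mu-2$, I would assume $|\tilde{z}| \le \mu-3$ and let $\tilde{z}'$ be the condensed string of length $|\tilde{z}|+2$ sharing the first (and hence last) symbol of~$\tilde{z}$. Using \Cref{lem:dtwdist} together with the symmetry of~$\dtw$ (so that whichever of the two condensed operands is longer plays the role of $x$ in the lemma), a case split on $|\tilde{s}_i|$ relative to $|\tilde{z}|$ yields: strings with $|\tilde{s}_i| \le |\tilde{z}|$ contribute at most~$+1$ to $F(\tilde{z}') - F(\tilde{z})$; strings with $|\tilde{s}_i| \in \{|\tilde{z}|+1,|\tilde{z}|+2\}$ contribute at most~$0$; and strings with $|\tilde{s}_i| \ge |\tilde{z}|+3$ contribute exactly~$-1$. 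The ordering assumption gives at least $\lfloor k/2\rfloor + 1$ inputs with $|\tilde{s}_i| \ge \mu \ge |\tilde{z}|+3$ and at most $\lceil k/2\rceil - 1$ inputs with $|\tilde{s}_i| \le |\tilde{z}|$, so $F(\tilde{z}') - F(\tilde{z}) \le (\lceil k/2\rceil - 1) - (\lfloor k/2\rfloor + 1) \le -1$, a contradiction.

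The main obstacle is the degenerate sub-case in the upper-bound argument, where naively deleting the last two symbols of $\tilde{z}$ leaves the cost flat for every input; the resolution is to observe that such uniform degeneracy across all inputs forces them to share a first symbol opposite $\tilde{z}[1]$ and all attain the maximum condensation length~$m$, in which case an alternative shortening with flipped leading symbol strictly beats~$\tilde{z}$. The remaining bookkeeping in both directions is routine once the condensation-invariance of binary~$\dtw$ is in hand.
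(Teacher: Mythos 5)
Your overall strategy---reduce to condensed means, then show that any condensed candidate violating a bound can be modified into a strictly cheaper one, with a separate fix for the degenerate case where every input has the opposite first symbol---is exactly the paper's, and your upper-bound argument is correct and complete (the paper truncates to length~$m$ in one step rather than stripping two symbols, but handles the same degenerate subcase the same way). However, two steps rest on a false premise. The ``preliminary invariance'' $\dtw(z,s_i)=\dtw(\tilde z,s_i)$ does not hold: by \Cref{thm:binDTW} the binary dtw distance depends on the actual block \emph{sizes} of the string with more blocks, not only on condensation lengths, so condensing can strictly decrease the distance (e.g.\ $\dtw(0110,0)^2=2$ while $\dtw(010,0)^2=1$). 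Only the inequality $\dtw(\tilde z,s_i)\le\dtw(z,s_i)$ holds, which is all the reduction needs, so this particular slip is harmless---but it signals the real problem below.

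In the lower bound, for inputs with $|\tilde s_i|>|\tilde z|$ you justify the case values by ``\Cref{lem:dtwdist} together with the symmetry of dtw, so that whichever of the two condensed operands is longer plays the role of~$x$.'' But $s_i$ is not condensed, so \Cref{lem:dtwdist} is inapplicable whenever $s_i$ is the operand with the longer condensation. Concretely, your claim that inputs with $|\tilde s_i|\ge|\tilde z|+3$ contribute \emph{exactly} $-1$ is false: if the misaligned inner blocks of~$s_i$ all have size~$5$, lengthening the candidate by two decreases the squared distance by~$5$. What the proof needs is a decrease of \emph{at least}~$1$, and that requires \Cref{thm:binDTW} (lowering the \MSS parameter~$r$ by one drops the optimal sum by at least the size of one selected block, hence by at least~$1$) together with \Cref{obs:wlog} to cover mismatched first/last symbols---which is how the paper argues this case. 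Likewise, the bound $\dtw(\tilde z,s_i)^2\ge 2$ needed when $|\tilde s_i|=|\tilde z|+2$ and both endpoints mismatch must come from \Cref{obs:wlog}, or from first condensing~$s_i$ (using the inequality above) and only then applying \Cref{lem:dtwdist}. With these justifications substituted, your per-case bounds are correct and your count $(\lceil k/2\rceil-1)-(\lfloor k/2\rfloor+1)<0$ closes the argument just as the paper's does.
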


\begin{proof}
  It suffices to show the claimed bounds for condensed means.
  Since $\dtw(\tilde{x},y)\le \dtw(x,y)$ holds for all strings~$x$, $y$~\cite[Proposition~1]{BFFJNS19}, the bounds also hold for arbitrary means.
  
  The upper bound~$m+1$ can be derived from \Cref{lem:dtwdist}.
  Let~$x$ be a condensed string of length $|x|\ge m+2$ and let~$x'\coloneqq x[1,m]$.
  If~$|x| > m+2$, then~$\dtw(x',s_i)^2 < \dtw(x,s_i)^2$ holds for every~$i\in[k]$, which implies~$F(x') = \sum_{i=1}^k\dtw(s_i,x')^2 < \sum_{i=1}^k\dtw(s_i,x)^2=F(x)$. Hence,~$x$ is not a mean.
  If~$|x| = m+2$, then $\dtw(x',s_i)^2 \le \dtw(x,s_i)^2$ holds for every~$i\in[k]$, that is, $F(x')\le F(x)$.
  If $F(x') < F(x)$, then~$x$ is clearly not a mean.
  If~$F(x') = F(x)$, then $\dtw(x',s_i)^2 = \dtw(x,s_i)^2$ holds for all~$i\in[k]$.
  In fact, $\dtw(x',s_i)^2 = \dtw(x,s_i)^2$ only holds if~$|\tilde{s}_i|=m$ and~$s_i[1]\neq x[1]$, in which case $\dtw(x,s_i)^2 = 2$.
  Thus, we have~$F(x)=2k$ and $\tilde{s}_1=\tilde{s}_2=\dots = \tilde{s}_k$. But then $\tilde{s}_1$ is clearly the unique mean (with~$F(\tilde{s}_1)=0$).
  
  For the lower bound, let~$x$ be a condensed string of length~$\ell < \mu-2$ and let~$x'\coloneqq x[1]\ldots x[\ell]x[\ell-1]x[\ell]$.
  Then, for every $s_i$ with~$|\tilde{s}_i| \le \ell$ (of which there are less than $\lceil k/2\rceil$ since $\ell < \mu$), it holds $\dtw(x',s_i)^2 \le \dtw(x,s_i)^2 +1$ (by \Cref{lem:dtwdist}).

  Now, for every~$s_i$ with~$|\tilde{s}_i| > \ell+2$ (of which there are at least~$\lceil k/2 \rceil$ since $\ell+2 < \mu$),
  it holds $\dtw(x',s_i)^2 \le \dtw(x,s_i)^2 -1$. This is easy to see from \Cref{thm:binDTW} for the case that~$s_i[1]=x'[1]$ and~$s_i[-1]=x'[-1]$ holds since the number of misaligned blocks of~$s_i$ decreases by at least one. From this, \Cref{obs:wlog} yields the other three possible cases of starting and ending symbols since the sizes of the first and last block of~$x$ and of~$x'$ are clearly all the same (one).

  It remains to consider input strings $s_i$ with~$\ell < |\tilde{s}_i| \le \ell +2$.
  We show that in this case~$\dtw(x',s_i)^2 \le \dtw(x,s_i)^2$ holds.
  Let~$|\tilde{s}_i|=\ell+2$. Note that then either~$x'[1]=s_i[1]$ and~$x'[-1]=s_i[-1]$ holds or $x'[1]\neq s_i[1]$ and~$x'[-1]\neq s_i[-1]$ holds.
  In the former case, it clearly holds~$\dtw(x',s_i)^2=0$ by \Cref{lem:dtwdist}.
  In the latter case, we clearly have~$\dtw(x,s_i)^2\ge 2$, and, by \Cref{lem:dtwdist}, we have $\dtw(x',s_i)^2=2$.
  Finally, let~$|\tilde{s_i}|=\ell+1$ and note that then either~$x'[1]=s_i[1]$ and~$x'[-1]\neq s_i[-1]$ holds or $x'[1]\neq s_i[1]$ and~$x'[-1]= s_i[-1]$ holds.
  Thus, we clearly have $\dtw(x,s_i)^2\ge 1$. By \Cref{lem:dtwdist}, we have $\dtw(x',s_i)^2 =1$.

  Summing up, we obtain~$F(x') \le F(x) + a - b$, where~$a=|\{i\in[k]\mid |\tilde{s}_i| < \ell\}| < \lceil k/2 \rceil$ and $b=|\{i\in[k]\mid |\tilde{s}_i| > \ell+2\}| \ge \lceil k/2 \rceil$.
  That is, $F(x') < F(x)$ and~$x$ is not a mean.
\end{proof}
\noindent
Note that the length bounds in \Cref{lem:bounds} are tight. For the upper bound, consider the two strings 000 and 111 having the two means 01 and 10. For the lower bound, consider the seven strings
0, 0, 0, 101, 101, 010, 010 with the unique mean~0.

\Cref{lem:bounds} upper-bounds the number of mean candidates we have to consider in terms of the condensation lengths of the inputs.
In order to compute the dtw~distances between mean candidates and input strings, we
can now use \Cref{lem:DTWdp}.
We arrive at the following result.

\begin{theorem}\label{thm:main}
  Let~$s_1,\ldots,s_k$ be binary strings with~$|\tilde{s}_1| \le \dots \le |\tilde{s}_k|$ and let $n=\max_{j=1,\ldots,k}|s_j|$, $\mu=|\tilde{s}_{\lceil k/2 \rceil}|$, and~$m = |\tilde{s}_k|$.
  The condensed means of~$s_1,\ldots,s_k$ can be computed in
  \begin{compactenum}[(i)]
    \item $O(kn^{1.87})$ time and in
    \item $O(k(n+m(m-\mu)))$ time.
  \end{compactenum}
\end{theorem}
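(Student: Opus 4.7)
The plan is to first restrict the search space using \Cref{lem:bounds} and then use \Cref{lem:DTWdp} to evaluate $F$ on every remaining candidate efficiently. Since a condensed mean $z$ satisfies $\mu-2 \le |\tilde z| \le m+1$, there are at most $2(m-\mu+4)$ condensed candidate strings (two per admissible length, one starting with $0$ and one starting with $1$). It therefore suffices, for each input $s_i$, to compute $\dtw(s_i,z)^2$ for every such candidate~$z$ and to add the squared distances column-wise; the minimizer(s) of the resulting sum $F$ over the candidate set are exactly the condensed means.

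For a fixed input $s_i$ with condensation length $\ell_i := |\tilde s_i|$, I split the candidates by length relative to $\ell_i$. For candidates $z$ with $|z| > \ell_i$ the string $z$ is condensed and longer than $\tilde s_i$, so \Cref{lem:dtwdist} gives $\dtw(s_i,z)^2$ in constant time (using symmetry $\dtw(s_i,z)=\dtw(z,s_i)$), yielding an $O(m-\mu)$ contribution. For candidates $z$ with $|z| \le \ell_i$ there are at most $\min(\ell_i,\,m+1)-\max(\mu-2,1)+1$ relevant lengths $\ell'\le\ell_i$ to consider, and the required distances can be produced in one batch by \Cref{lem:DTWdp} applied to $s_i$ with the appropriate $\ell'$.

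For part~(i), a single call of \Cref{lem:DTWdp}(i) on $s_i$ costs $O(|s_i|^{1.87})=O(n^{1.87})$ and simultaneously returns $\dtw(s_i,z)^2$ for all condensed $z$ of length in $[1,\ell_i]$ (from which the relevant range is read off in $O(m)$ extra time); the longer candidates add only $O(m-\mu)$ per input. Summed over the $k$ inputs this gives $O(k n^{1.87})$. For part~(ii), \Cref{lem:DTWdp}(ii) with $\ell' := \max(\mu-2,1)$ costs $O(\ell_i(\ell_i-\ell'))\le O(m(m-\mu))$ per input when $\ell_i\ge \ell'$, and the initial pass to read $s_i$ and extract block sizes is $O(|s_i|)=O(n)$; when $\ell_i<\ell'$ every candidate is longer than $\tilde s_i$ and is handled in constant time by \Cref{lem:dtwdist}. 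Hence the total is $O(k(n+m(m-\mu)))$.

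The main obstacle is that \Cref{lem:DTWdp} is stated only for condensed strings of length at most $|\tilde s|$, so it does not directly provide the distances to the longer candidates arising when $\ell_i < m+1$. The fix described above, invoking symmetry so that the candidate plays the role of the (condensed) ``long'' argument and applying \Cref{lem:dtwdist} in constant time per pair, cleanly separates the two regimes and lets each input be processed within the claimed budget. A small bit of bookkeeping is needed at the endpoints of the admissible length range (clamping $\ell'$ to $\max(\mu-2,1)$ and ignoring candidates of length exceeding $m+1$), but this does not affect the asymptotics.
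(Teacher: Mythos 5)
Your proposal is correct and follows essentially the same route as the paper's own proof: bound the candidate set via \Cref{lem:bounds}, handle candidates at least as long as $|\tilde{s}_i|$ in constant time each via \Cref{lem:dtwdist}, and batch-compute the distances to shorter candidates via \Cref{lem:DTWdp}, summing over the $k$ inputs. The extra bookkeeping you flag (symmetry of $\dtw$, clamping $\ell'$ at the lower end) is implicit in the paper's argument and does not change the analysis.
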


\begin{proof}
  From \Cref{lem:bounds}, we know that there are~$O(m-\mu)$ many candidate strings to check.
  First, in linear time, we determine the block lengths for each~$s_j$.
  Now, let~$x$ be a candidate string, that is,~$x$ is a condensed binary string with~$\mu-2 \le |x| \le m+1$. We need to compute~$\dtw(x,s_j)^2$ for each~$j=1,\ldots,k$.
  Consider a fixed string~$s_j$.
  For all candidates~$x$ with$|x| \ge |\tilde{s}_j|$, we can simply compute $\dtw(x,s_j)^2$ in constant time using \Cref{lem:dtwdist}. For all~$x$ with~$|x| < |\tilde{s}_j|$, we can use \Cref{lem:DTWdp}. Thus, overall, we can compute the dtw~distances between all candidates and all input strings in~$O(kn^{1.87})$ time, or alternatively in~$O(km(m-\mu))$~time.
  Finally, we determine the candidates with the minimum sum of dtw~distances in~$O(k(m-\mu))$~time.
\end{proof}

We remark that similar results also hold for the related problems \textsc{Weighted Binary DTW-Mean},
where the objective is to minimize~$F(z):=\sum_{i=1}^kw_i\dtw(s_i,z)^2$ for some $w_i\ge 0$,
and \textsc{Binary DTW-Center} with $F(z):=\max_{i=1,\ldots,k}\dtw(s_i,z)^2$ (that is, the dtw version of \textsc{Closest String}).
It is easy to see that also in these cases there exists a condensed solution.
Moreover, the length is clearly bounded between the minimum and the maximum condensation length of the inputs. Hence, analogously to \Cref{thm:main}, we obtain the following.

\begin{corollary}
  \textsc{Weighted Binary DTW-Mean} and \textsc{Binary DTW-Center} can be solved in
  $O(kn^{1.87})$ time and in  $O(k(n+m(m-\nu)))$ time,
    where $m$ is the maximum condensation length and~$\nu$ is the minimum condensation length.
\end{corollary}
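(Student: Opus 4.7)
The plan is to mirror the strategy of \Cref{thm:main}: reduce the search to a small family of condensed candidates of bounded length, enumerate them, and evaluate each objective with the batched dtw computations of \Cref{lem:DTWdp}. The first step is to observe that an optimal condensed solution exists for both \textsc{Weighted Binary DTW-Mean} and \textsc{Binary DTW-Center}, because $\dtw(\tilde z,s_i)\le \dtw(z,s_i)$ holds for every $i$; condensing $z$ therefore weakly decreases every weighted sum $\sum_i w_i\dtw(z,s_i)^2$ and every maximum $\max_i \dtw(z,s_i)^2$ of squared distances, so we may restrict to condensed $z$.

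Next I would show that some optimal condensed $z$ has length in $[\nu-2,\,m+1]$, which has size $O(m-\nu)$. For the upper bound, if $|z|>m+1$, then truncating $z$ to $z':=z[1,m]$ satisfies $\dtw(z',s_i)^2\le \dtw(z,s_i)^2$ for every $i$ by \Cref{lem:dtwdist}, and strictly so whenever $|\tilde s_i|<|z|-1$; this strictly improves the weighted sum (assuming at least one positive weight, otherwise the problem is trivial) and weakly improves the maximum, and the single degenerate equality case (all inputs have the same condensation length $m$) is handled directly since then $\tilde s_1$ is a trivially optimal candidate already inside the range. For the lower bound, if $|z|<\nu-2$, then every $s_i$ satisfies $|\tilde s_i|\ge \nu > |z|+2$; the extension $z':=z[1]\cdots z[|z|]z[|z|-1]z[|z|]$ used in the proof of \Cref{lem:bounds} then gives $\dtw(z',s_i)^2\le\dtw(z,s_i)^2-1$ for \emph{every} $i$, which strictly improves both objectives simultaneously.

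With the candidate set (two choices of starting symbol for each length in $\{\nu-2,\ldots,m+1\}$) of total size $O(m-\nu)$ fixed, the algorithm proceeds exactly as in the proof of \Cref{thm:main}. For each input $s_j$, distances to candidates with $|z|\ge|\tilde s_j|$ are obtained in constant time per pair via \Cref{lem:dtwdist}, while distances to the remaining candidates (with $|z|<|\tilde s_j|$) are computed in a single batch via \Cref{lem:DTWdp} in $O(|s_j|^{1.87})$ time or $O(|\tilde s_j|(|\tilde s_j|-\nu))$ time. Summing over $j\in[k]$ and evaluating either the weighted sum or the maximum per candidate in $O(k)$ time yields the claimed $O(kn^{1.87})$ and $O(k(n+m(m-\nu)))$ bounds.

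I expect the main obstacle to be the length lower bound. The proof of \Cref{lem:bounds} exploited the median through a counting argument comparing $|\{i:|\tilde s_i|<\ell\}|$ with $|\{i:|\tilde s_i|>\ell+2\}|$, which crucially balanced inputs on both sides of $|z|$. This median-based balancing does not transfer to weighted sums or to a maximum, so the weaker bound $|z|\ge \nu-2$ is the best one can hope for; fortunately it suffices, because below that threshold there are no ``shorter'' inputs at all to counteract an extension of $z$, so every input's squared distance strictly decreases simultaneously and no balancing argument is needed.
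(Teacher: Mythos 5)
Your proposal is correct and follows essentially the same route as the paper, which establishes this corollary only via a one-line remark: an optimal condensed solution exists (by $\dtw(\tilde z,s_i)\le\dtw(z,s_i)$), its length lies in a window of size $O(m-\nu)$, and the candidate-enumeration machinery of \Cref{thm:main} together with \Cref{lem:dtwdist} and \Cref{lem:DTWdp} then applies verbatim. If anything, your window $[\nu-2,m+1]$ is more careful than the paper's literal claim that the length is ``bounded between the minimum and the maximum condensation length'' --- the paper's own example of the strings $000$ and $111$ (where $m=\nu=1$ but the optimal solutions $01$ and $10$ have length $m+1$) shows the upper end must be $m+1$ --- and this constant additive slack is absorbed by the stated asymptotic bounds.
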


\section{Linear-Time Solvable Special Cases}

Notably, \Cref{thm:main}~(ii) yields a linear-time algorithm when~$m-\mu$ is constant
and also when all input strings have the same length~$n$ and~$m(m-\mu)\in O(n)$.
Now, we show two more linear-time solvable cases.

\begin{theorem}\label{thm:k=2}
  A condensed mean of two binary strings can be computed in linear time.
\end{theorem}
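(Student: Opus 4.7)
The plan is to establish that, under the WLOG assumption $|\tilde{s}_1|\le|\tilde{s}_2|$, the condensation $\tilde{s}_2$ is itself a condensed mean of $s_1$ and $s_2$; since $\tilde{s}_2$ is produced by a single linear scan of $s_2$, this proves the theorem. Set $\mu\coloneqq|\tilde{s}_1|$ and $m\coloneqq|\tilde{s}_2|$. Because $\dtw(s_2,\tilde{s}_2)=0$, we have $F(\tilde{s}_2)=\dtw(s_1,\tilde{s}_2)^2$, whose value is read off directly from \Cref{lem:dtwdist} in terms of $\mu$, $m$, $s_1[1]$, and $s_2[1]$.

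For optimality, by \Cref{lem:bounds} it suffices to bound $F(z)$ from below for every condensed $z$ with $|z|\in[\mu-2,m+1]$. I would parametrize such $z$ by its length $\ell$ and its starting symbol, and split into cases according to whether $z[1]$ and $z[-1]$ agree with $s_2[1]$ and $s_2[-1]$. The core of the argument is a monotonicity claim in the ``matched'' regime where $z[1]=s_2[1]$, $z[-1]=s_2[-1]$, and $\mu\le\ell\le m$: here \Cref{lem:dtwdist} expresses $\dtw(s_1,z)^2$ in a closed form that grows by at most $1$ whenever $\ell$ is incremented by $2$, whereas by \Cref{thm:binDTW}, $\dtw(s_2,z)^2$ equals the value of an \MSS instance on the middle block sizes of $s_2$ with parameter $r=(m-\ell)/2$, and this value must drop by at least $1$ per unit decrease of $r$ (deleting any picked block from an optimal $r$-selection yields a feasible $(r-1)$-selection smaller by at least the deleted block's size, which is at least~$1$). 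Consequently $F$ is non-increasing in $\ell$ on $[\mu,m]$, and \Cref{lem:dtwdist} forces $F$ to strictly increase on $(m,m+1]$, so within this regime the minimum is attained at $\ell=m$, i.e., at $z=\tilde{s}_2$.

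The three remaining regimes (mismatched start, mismatched end, or $\ell<\mu$) are handled by peeling off the leading or trailing block via \Cref{obs:wlog}. Each peeling contributes a nonnegative additive cost equal to the size of $s_2$'s (or $s_1$'s) first or last block, at least~$1$, before the recursion either enters the matched regime already covered by the monotonicity or splits into another mismatched sub-case. Chasing the $\min$-recursions of \Cref{obs:wlog} through each branch then yields $F(z)\ge F(\tilde{s}_2)$. The main obstacle is to organize this four-way case distinction cleanly and to verify that in every branch the peeling cost cannot be recouped in the remaining distances; once the matched-case monotonicity is in hand, each sub-case reduces to comparing two closed-form expressions from \Cref{lem:dtwdist}, so the remaining work is routine bookkeeping rather than any genuinely new idea.
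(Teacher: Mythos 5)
There is a genuine gap at the very foundation of your argument: the claim that $\tilde{s}_2$ is a condensed mean under the assumption $|\tilde{s}_1|\le|\tilde{s}_2|$ is \emph{false} when equality holds. Take $s_1=0000011111$ and $s_2=10$, so that $|\tilde{s}_1|=|\tilde{s}_2|=2$. Using \Cref{obs:wlog} one computes $F(\tilde{s}_2)=F(10)=\dtw(0000011111,10)^2=\min(5+1,1+5)=6$, whereas $F(\tilde{s}_1)=F(01)=\dtw(10,01)^2=2$. So $\tilde{s}_2$ is not a mean; the ``WLOG'' ordering does not break the tie between the two condensations, and in general neither of them need be optimal when the condensation lengths coincide. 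The paper avoids exactly this trap by splitting off the case $\ell_1=\ell_2$ and handling it with the general candidate-checking algorithm of \Cref{thm:main} (which is linear there because $m-\mu=0$), and it asserts that $\tilde{s}_2$ is a mean only under the \emph{strict} inequality $\ell_1<\ell_2$. Your proof would need the same (or an equivalent) case split; as written, the statement you set out to prove is simply not true.

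For the strict case $\ell_1<\ell_2$, your overall strategy (lower-bound $F(z)$ over the candidate lengths permitted by \Cref{lem:bounds}) matches the paper's, and your monotonicity argument in the matched regime --- $\dtw(s_1,z)^2$ grows by at most $1$ per length increment of $2$ while the \MSS{} value drops by at least $1$ per unit decrease of $r$ --- is sound and is essentially the same mechanism the paper uses to prove \Cref{lem:bounds} itself. However, the part you defer as ``routine bookkeeping'' (the mismatched-endpoint regimes and the lengths below $\mu$, chased through the $\min$-recursions of \Cref{obs:wlog}) is where the remaining content lies; the paper instead disposes of these cases with three short inequalities built on monotonicity of $\dtw$ under condensation ($\dtw(\tilde{x},y)\le\dtw(x,y)$) and the bound $\dtw(s_1,x)^2+\dtw(s_2,x)^2\ge\dtw(s_1,\tilde{s}_2)^2$ for $\ell_1\le|x|<\ell_2$, which is cleaner than a branch-by-branch peeling analysis. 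So beyond the equality-case error, the proposal is a plausible but incomplete sketch rather than a proof.
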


\begin{proof}
  Let~$s_1,s_2\in\{0,1\}^*$ be two input strings.
  We first determine the condensations and block sizes of~$s_1$ and $s_2$ in linear time.
  Let~$\ell_i\coloneqq|\tilde{s}_i|$, for $i\in[2]$, and assume that $\ell_1 \le \ell_2$.
  In the following, all claimed relations between dtw~distances can easily be verified using \Cref{obs:wlog} (together with \Cref{thm:binDTW}) and \Cref{lem:dtwdist}.
 
  If $\ell_1 = \ell_2$, then, by \Cref{thm:main} (with $\mu=m=\ell_1$), all condensed means can be computed in~$O(\ell_1)$ time.
  
  If~$\ell_1 < \ell_2$, then~$\tilde{s}_2$ is a mean.
  To see this, note first that~$F(\tilde{s}_2)=\dtw(s_1,\tilde{s}_2)^2$.
  Let~$x$ be a condensed string.
  If~$|x| < \ell_1$, then~$\dtw(s_1,x)^2 > 0$ and
  $\dtw(s_2,x)^2 \ge \dtw(s_2,\tilde{s}_1)^2 \ge \dtw(\tilde{s}_2,\tilde{s}_1)^2 = \dtw(\tilde{s}_2,s_1)^2$.
  Thus, $F(x) > F(\tilde{s}_2)$.
  Similarly, if~$|x| > \ell_2$, then~$\dtw(s_1,x)^2 \ge \dtw(s_1,\tilde{s}_2)^2$, $\dtw(s_2,x)^2 > 0$, and $F(x) > F(\tilde{s}_2)$.
  If~$\ell_1\le |x| < \ell_2$, then $\dtw(s_1,x)^2 + \dtw(s_2,x)^2 \ge \dtw(s_1,\tilde{s}_2)^2$, and thus~$F(x) \ge F(\tilde{s}_2)$.
\end{proof}

For three input strings, we show linear-time solvability if all strings begin with the same symbol and end with the same symbol.

\begin{theorem}\label{thm:k=3}
  Let~$s_1,s_2,s_3$ be binary strings with $s_1[1]=s_2[1]=s_3[1]$ and~$s_1[-1]=s_2[-1]=s_3[-1]$.
  A condensed mean of~$s_1,s_2,s_3$ can be computed in linear time.
\end{theorem}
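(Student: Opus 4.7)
Let $\sigma=s_1[1]=s_2[1]=s_3[1]$ and $\tau=s_1[-1]=s_2[-1]=s_3[-1]$, and after relabelling let $\ell_1\le\ell_2\le\ell_3$ be the condensation lengths; all three have the same parity. The plan is to reduce the problem to a one-parameter optimisation over the length of a condensed candidate, and then to extract that length by a linear-time greedy on the inner blocks of $s_3$.

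First I would show that some condensed mean $z$ satisfies $z[1]=\sigma$ and $z[-1]=\tau$. Given a condensed $z$ with $z[1]=\bar\sigma$, consider $z':=\sigma z$: it is still condensed and starts with $\sigma$. For each input $s_i$ with $\ell_i\le|z|$, the ceiling/floor formulas of \Cref{lem:dtwdist} give $\dtw(s_i,z')^2\le\dtw(s_i,z)^2$ by a direct comparison. For $\ell_i>|z|$, \Cref{obs:wlog} in the first-symbol-mismatch branch yields $\dtw(s_i,z)^2\ge 1+\dtw(s_i,z[2,|z|])^2$; since both $z[2,|z|]$ and $z'$ have matching endpoints with $s_i$, \Cref{thm:binDTW} together with monotonicity of \MSS{} in the size parameter implies $1+\dtw(s_i,z[2,|z|])^2\ge\dtw(s_i,z')^2$, so $F(z')\le F(z)$. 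End mismatches are handled by the symmetric modification $z \tau$, and a double mismatch by applying both fixes.

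For condensed $z$ with matching endpoints, \Cref{lem:dtwdist} (case $\ell_i\le|z|$) and \Cref{thm:binDTW} (case $\ell_i>|z|$) render $\dtw(s_i,z)^2$ a function of $\ell:=|z|$ alone: it equals $(\ell-\ell_i)/2$ when $\ell\ge\ell_i$, and otherwise $\text{MSS}_i((\ell_i-\ell)/2)$, where $\text{MSS}_i(r)$ is the \MSS{} value on the inner block sizes of $s_i$. Writing $r_i:=(\ell_i-\ell)/2$ and $c^{(i)}_r:=\text{MSS}_i(r)-\text{MSS}_i(r-1)$, a regime-by-regime computation of $F(\ell+2)-F(\ell)$ gives $+3$ for $\ell\ge\ell_3$, $2-c^{(3)}_{r_3}$ on $[\ell_2,\ell_3)$, and at most $-1$ for every $\ell<\ell_2$ (each marginal $c^{(i)}_r$ is a positive integer). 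Hence the optimum $\ell^\ast$ lies in $[\ell_2,\ell_3]$.

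On that interval, setting $r=(\ell_3-\ell)/2\in[0,T]$ with $T:=(\ell_3-\ell_2)/2$, minimising $F$ becomes maximising the profit $2r-\text{MSS}_3(r)=\sum_{j\in S}(2-b_j)$ over non-neighbouring subsets $S$ of size $r\le T$ drawn from the inner block sizes $b_j$ of $s_3$. Inner blocks of size $1$ contribute $+1$, of size $2$ contribute $0$, and of size $\ge 3$ contribute negatively, so the unconstrained maximum equals $R_1$, the largest number of mutually non-neighbouring size-$1$ inner blocks of $s_3$, attained by selecting exactly such blocks. Respecting the cap, $r^\ast:=\min(R_1,T)$ and $\ell^\ast:=\ell_3-2r^\ast$, and the unique condensed string of length $\ell^\ast$ starting with $\sigma$ is then a mean. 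The value $R_1$ is a maximum-weight-independent-set-on-a-path problem on the $0/1$ indicator sequence $[b_j=1]$, solvable by the standard $A[i]=\max(A[i-1],A[i-2]+[b_i=1])$ DP in $O(\ell_3)$ time; combined with condensing, block-size extraction, sorting three integers, and writing out the output, the total time is $O(n)$. The main obstacle lies in the exchange argument of the second paragraph, where one must collapse the \Cref{obs:wlog} recursion back to a case covered by \Cref{thm:binDTW} before \MSS{}-monotonicity can close the inequality.
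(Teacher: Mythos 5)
Your overall route is the same as the paper's: argue that a condensed mean starts with $\sigma$ and ends with $\tau$, show the optimal length lies in $[\ell_2,\ell_3]$ by a marginal/exchange analysis, and then reduce the choice of length to maximizing the number of pairwise non-neighboring size-one inner blocks of $s_3$, computable in $O(\ell_3)$ time. Your third and fourth paragraphs are sound, and in one respect more careful than the paper: you cap $r$ at $T=(\ell_3-\ell_2)/2$ and take $r^\ast=\min(R_1,T)$, whereas the paper states the mean length as $\ell_3-2\rho$ for the uncapped maximum $\rho$, which taken literally can drop below $\ell_2$ and contradict its own lower bound (the cap is implicit there, given the preceding claim that every condensed mean has length at least $\ell_2$). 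Your observation that $\mathrm{MSS}_3(r)=r$ for all $r\le R_1$, so the profit $2r-\mathrm{MSS}_3(r)$ increases up to $R_1$ and never exceeds $R_1$ afterwards, cleanly justifies this choice.

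The one genuine flaw is in your second paragraph. \Cref{obs:wlog} expresses $\dtw(s_i,z)^2$ as a \emph{minimum} of two branches, one of which is $|z^{(1)}|+\dtw(s_i,z[2,|z|])^2 = 1+\dtw(s_i,z[2,|z|])^2$; this yields $\dtw(s_i,z)^2\le 1+\dtw(s_i,z[2,|z|])^2$, not the "$\ge$" you wrote. Since you need a \emph{lower} bound on $\dtw(s_i,z)^2$ to conclude $F(z')\le F(z)$, you must also bound the other branch, $|s_i^{(1)}|+\dtw(s_i[\,|s_i^{(1)}|+1,\cdot\,],z)^2$, from below by $\dtw(s_i,z')^2$ --- precisely the case where the optimal warping path prefers to misalign the first block of $s_i$ rather than the first symbol of $z$. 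You flag this as the "main obstacle," and indeed it is the only real gap; to be fair, the paper's own justification of the endpoint claim is a one-sentence "observe that either removing the first symbol or adding $s_1[1]$ to the front yields a better $F$-value," i.e., it hedges between two modifications rather than committing to prepending alone, and is no more detailed. With that branch of the minimum handled (or with the paper's either/or formulation), your argument goes through.
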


\begin{proof}
  We first determine the condensations and block sizes of~$s_1,s_2$, and $s_3$ in linear time.
  Let~$\ell_i\coloneqq|\tilde{s}_i|$, for $i\in[3]$, and assume $\ell_1 \le \ell_2 \le \ell_3$.
  Note that every mean starts with~$s_1[1]$ and ends with~$s_1[-1]$. To see this, consider any string~$x$ with~$x[1]\neq s_1[1]$ (or $x[-1]\neq s_1[-1]$) and observe that either removing the first (or last) symbol or adding $s_1[1]$ to the front (or $s_1[-1]$ to the end) yields a better~$F$-value.
  Moreover, it is easy to see that every condensed mean has length at least~$\ell_2$ since increasing the length of any shorter condensed string by two increases the dtw~distance to~$s_1$ by at most one (\Cref{lem:dtwdist}) and decreases the dtw~distances to~$s_2$ and~$s_3$ by at least one (\Cref{thm:binDTW}).
  
  Note that a mean could be even longer than~$\ell_2$ since further increasing the length by two increases the dtw~distance to~$s_1$ and~$s_2$ by at most one and could possibly decrease the dtw~distance to~$s_3$ by at least two (if a misaligned block of size at least two can be saved). In fact, we can determine an optimal mean length in $O(\ell_3)$ time by greedily computing the maximum number $\rho$ of 1-separated (that is, non-neighboring) blocks of size one among~$s_3^{(2)},\ldots,s_3^{\ell_3-1}$. Then there is a mean of length~$\ell_3-2\rho$ (that is, exactly $\rho$ size-1 blocks of~$s_3$ are misaligned).
  Clearly, any longer condensed string has a larger $F$-value and every shorter condensed string has at least the same~$F$-value.
\end{proof}

We strongly conjecture that similar but more technical arguments can be used to obtain a linear-time algorithm for three arbitrary input strings. For more than three strings, however, it is not clear how to achieve linear time, since the mean length cannot be greedily determined.

\section{Empirical Evaluation}
We conducted some experiments to empirically evaluate our algorithms and to observe structural characteristics of binary means.
In \Cref{sec:alg-exp} we compare the running times of our $O(k(n+m(m-\mu)))$-time algorithm (\Cref{thm:main}~(ii)) with the standard $O(kn^2)$-time dynamic programming approach~\cite{BFFJNS19} described in the beginning of \Cref{sec:main}.
We implemented both algorithms in Python.\footnote{Source code available at \url{www.akt.tu-berlin.de/menue/software/}.}
Note that we did not test the $O(kn^{1.87})$-time algorithm since it uses another blackbox algorithm (which has not been implemented so far) in order to solve \MSS. However, we believe that it is anyway slower in practice.
In \Cref{sec:struct}, we empirically investigate structural properties of binary condensed means such as the length and the starting symbol (note that these two characteristics completely define the mean).
All computations have been done on an Intel i7 QuadCore~(4.0 GHz).

For our experiments we used the CASAS human activity datasets\footnote{Available at \url{casas.wsu.edu/datasets/}.}~\cite{CCTK13} as well as some randomly generated data.
The data in the CASAS datasets are generated from sensors which detect (timestamped) changes in the environment (for example, a door being opened/closed) and have previously been used in the context of binary dtw computation~\cite{MCAHM16}.
We used the datasets HH101--HH130 and sampled from them to obtain input strings of different lengths and sparsities (for a binary string~$s$, we define the \emph{sparsity} as $|\tilde s| / |s|$).
For the random data, the sparsity value was used as the probability that the next symbol in the string will be different from the last one (hence, the actual sparsities are not necessarily exactly the sparsities given but very close to those).

\subsection{Running Time Comparison}\label{sec:alg-exp}
\begin{figure}[ht]
  \centering
  \includegraphics[scale=0.365]{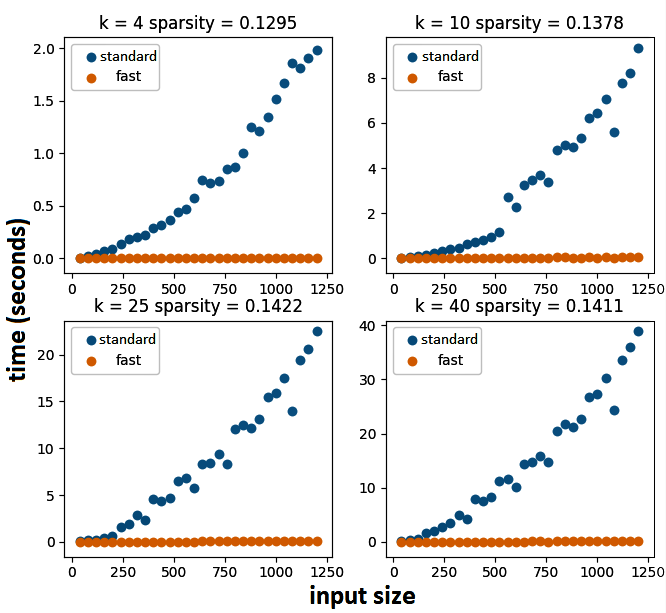}
  \includegraphics[scale=0.365]{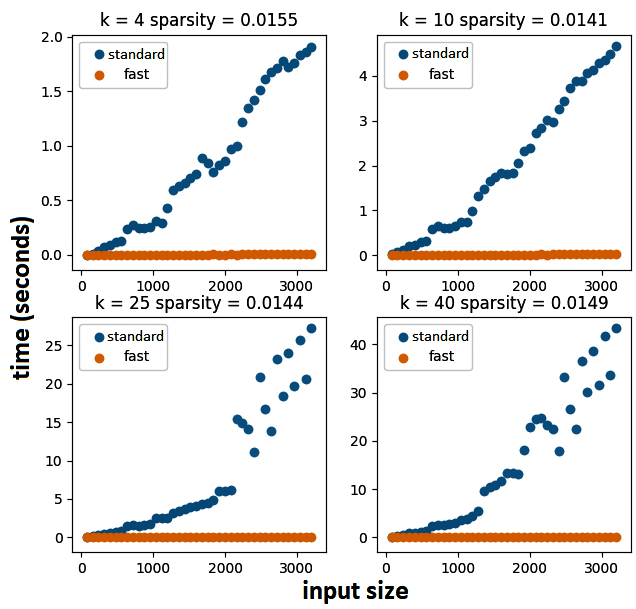}
  \captionof{figure}{Running times of the standard and the fast algorithm on sparse data (sensor D002 in dataset HH101) in 10-minute intervals (left) and 1-minute intervals (right).}
  \label{fig9}
\end{figure}

\begin{figure}[t]
  \centering
  \includegraphics[scale=0.37]{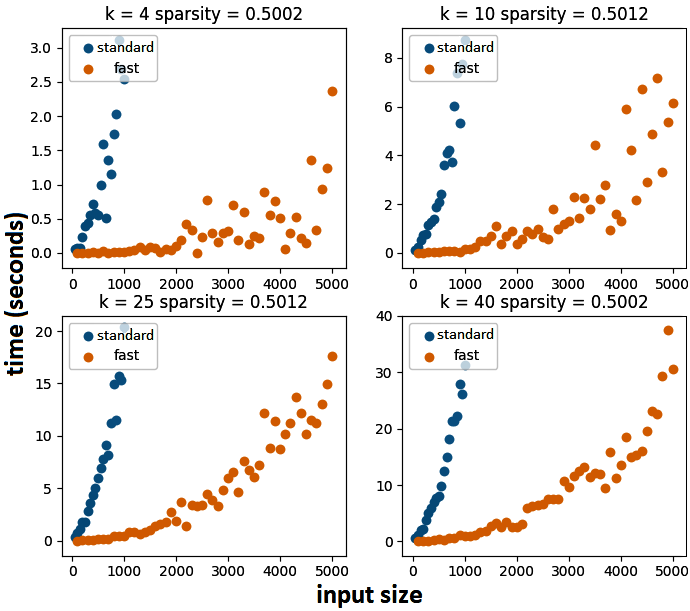}
  \includegraphics[scale=0.375]{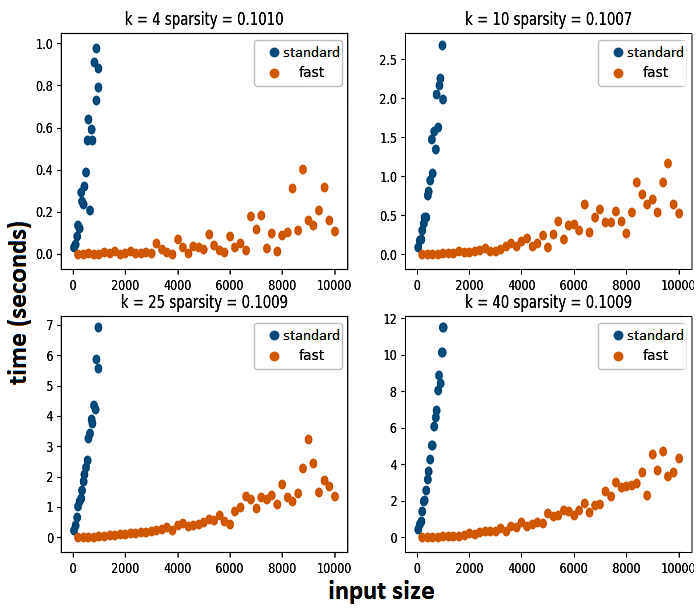}
  \captionof{figure}{Running times of the standard algorithm for $n \le 1000$ and of the fast algorithm for $n\le 5000$ on dense random data (left) and for $n\le 10,000$ on sparse random data (right).}
  \label{fig11}
\end{figure}

To examine the speedup provided by our algorithm, we compare it with the standard $O(kn^2)$-time dynamic programming algorithm on (very) sparse real-world data (sparsity $\approx 0.1$ and $\approx 0.01$) and on sparse (sparsity $\approx 0.1$) and dense (sparsity $\approx 0.5$) random data, both for various values of $k$.
\Cref{fig9} shows the running times on real-world data.
For sparsity $\approx 0.1$, our algorithm is around 250~times faster than the standard algorithm and for sparsity $\approx 0.01$ it is around 350~times faster.
\Cref{fig11} shows the running times of the algorithms on larger random data. For sparsity $\approx 0.1$, our algorithm is still twice as fast for $n=10,000$ as the standard algorithm for $n=1000$.
These results clearly show that our algorithm is valuable in practice.

\subsection{Structural Observations}\label{sec:struct}
We also studied the typical shape of binary condensed means.
The questions of interest are ``What is the typical length of a condensed mean?'' and ``What is the first symbol of a condensed mean?''. Since the answers to these two questions completely determine a condensed mean, we investigated whether they can be easily determined from the inputs. 

To answer the question regarding the mean length, we tested how much the actual mean length differs from the median condensed length. Recall that by \cref{lem:bounds} we know that every condensed mean has length at least $\mu-2$, where $\mu$ is the median input condensation length. We call this lower bound the median condensed length.
We used our algorithm (\Cref{thm:main}~(ii)) to compute condensed means
on random data with sparsities $0.01,\ldots,1.0$, $k=1,\ldots,60$ and $n \le 400$.
\Cref{fig13} clearly shows that on dense data (sparsity $ > 0.5$), the difference between the mean length and the median condensed length is rarely more than one.
This can be explained by the fact that for dense strings all blocks are usually small such that
there is no gain in making the mean longer than the median condensed length.
We remark that a difference of one appears quite often which might be caused by different starting or ending symbols of the inputs.
In general, for dense data the mean length almost always is at most the median condensed length plus one, whereas for sparser data the mean can become longer.
As regards the dependency of the mean length on the number~$k$ of inputs, it can be observed
that, for sparse data (sparsity $< 0.5$), the mean length differs even more for larger~$k$.
This may be possible because more input strings increase the probability that there is one input string with long condensation length and large block sizes. For dense inputs, there seems to be no real dependence on~$k$.

\begin{figure}[t]
  \center
  \includegraphics[scale=0.3]{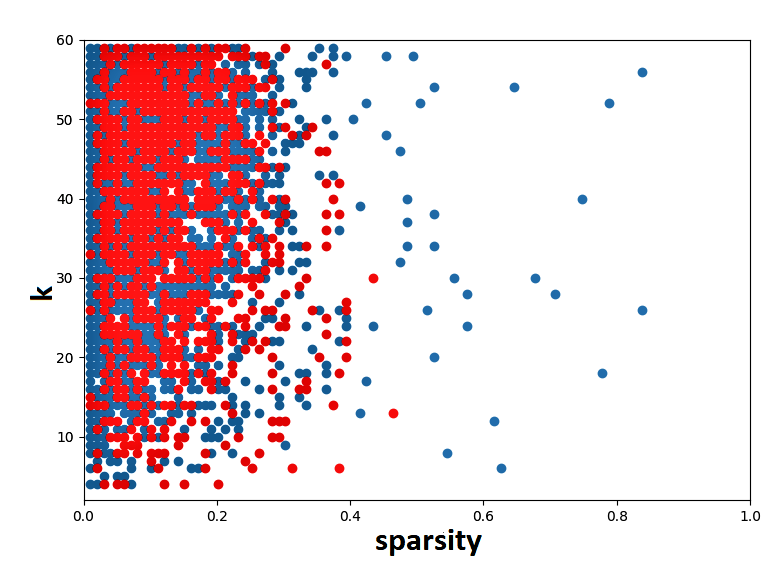}
  \caption{Difference between median condensed length and calculated mean length depending on sparsity and number of input strings. For every pair~$(\sigma,k)\in\{0.01,\ldots,1.0\}\times[60]$, we calculated one mean for $k$ strings with sparsity~$\sigma$. No dot means that the median condensed length and the mean length did not differ by more than one. A blue (dark gray) dot means they differed slightly (difference between two and four) and a red (light gray) dot means they differed by at least five.}
  \label{fig13}
\end{figure}

To answer the question regarding the first symbol of a mean, we tested on random data with different $k$~values and different sparsities ($n \le 500$), how the starting symbol of the mean depends on the starting symbols or blocks of the input strings. First, we tested how often the starting symbol of the mean equals the majority of starting symbols of the input strings (see \Cref{tbl1}).
Then, we also summed up the lengths of all starting 1-blocks and the lengths of all starting 0-blocks and checked how often the mean starts with the symbol corresponding to the larger of those two sums (see \Cref{tbl2}). Overall, the starting symbol of the mean matches the majority of starting symbols or blocks of the input strings in most cases ($\approx$ 70--90\%, increasing with higher sparsity).
For low sparsities, however, taking the length of starting blocks into account seems to yield less matches. This might be due to large outlier starting blocks (note that this effect is even worse for larger~$k$).
\begin{table}[t]
  \centering
\caption{Frequency (over 1000 runs) of the first symbol of the mean also being the first symbol in the majority of input strings.}
\begin{tabular}{c|cccccc}
$k$/sparsity & 0.05 & 0.1 & 0.2 & 0.5 & 0.8 & 1\\
\hline
5 & 76\% & 79\% & 82\% & 82\% & 82\% & 80\%\\
15 & 75\% & 81\% & 82\% & 83\% & 85\% & 85\%\\
40 & 82\% & 84\% & 88\% & 87\% & 91\% & 97\%
\end{tabular}
\label{tbl1}
\end{table}

\begin{table}[t]
  \centering
\caption{Frequency (over 1000 runs) of the first symbol of the mean also being the majority of symbols throughout the first blocks of input strings.}
\begin{tabular}{c|cccccc}
$k$/sparsity & 0.05 & 0.1 & 0.2 & 0.5 & 0.8 & 1\\
\hline
5 & 69\% & 73\% & 75\% & 83\% & 85\% & 80\% \\
15 & 67\% & 73\% & 75\% & 82\% & 88\% & 85\% \\
40 & 66\% & 70\% & 74\% & 81\% & 91\% & 97\%
\end{tabular}
\label{tbl2}
\end{table}

To sum up the above empirical observations, we conclude that a condensed binary mean typically has
a length close to the median condensed length and starts with the majority symbol among the starting symbols in the inputs.

\section{Conclusion}
In this work we made progress in understanding and efficiently computing binary means of binary strings with respect to the dtw~distance.
First, we proved tight lower and upper bounds on the length of a binary (condensed) mean which we then used to obtain
fast polynomial-time algorithms to compute binary means by solving a certain number problem efficiently.
We also obtained linear-time algorithms for~$k\le 3$ input strings.
Moreover, we empirically showed that the actual mean length is often very close to the proven lower bound.

As regards future research challenges, it would be interesting to further improve the running time with respect to the maximum input string length~$n$. This could be achieved by finding faster algorithms for our ``helper problem'' 
\textsc{Min 1-Separated Sum} (\MSS).
Can one solve \BDTW in linear time for every constant~$k$ (that is, $f(k)\cdot n$ time for some function~$f$)?
Also, finding improved algorithms for the weighted version or the center version (see Section~\ref{sec:main}) might be of interest.

\newpage

\bibliography{ref}

\begin{thebibliography}{20}
\providecommand{\natexlab}[1]{#1}
\providecommand{\url}[1]{\texttt{#1}}
\expandafter\ifx\csname urlstyle\endcsname\relax
  \providecommand{\doi}[1]{doi: #1}\else
  \providecommand{\doi}{doi: \begingroup \urlstyle{rm}\Url}\fi

\bibitem[{Abboud} et~al.(2015){Abboud}, {Backurs}, and {Williams}]{ABW15}
A.~{Abboud}, A.~{Backurs}, and V.~V. {Williams}.
\newblock Tight hardness results for {LCS} and other sequence similarity
  measures.
\newblock In \emph{2015 IEEE 56th Annual Symposium on Foundations of Computer
  Science (FOCS~'15)}, pages 59--78, 2015.

\bibitem[Brill et~al.(2019)Brill, Fluschnik, Froese, Jain, Niedermeier, and
  Schultz]{BFFJNS19}
M.~Brill, T.~Fluschnik, V.~Froese, B.~J. Jain, R.~Niedermeier, and D.~Schultz.
\newblock Exact mean computation in dynamic time warping spaces.
\newblock \emph{Data Mining and Knowledge Discovery}, 33\penalty0 (1):\penalty0
  252--291, 2019.

\bibitem[Bringmann and K{\"{u}}nnemann(2015)]{BK15}
K.~Bringmann and M.~K{\"{u}}nnemann.
\newblock Quadratic conditional lower bounds for string problems and dynamic
  time warping.
\newblock In \emph{2015 {IEEE} 56th Annual Symposium on Foundations of Computer
  Science (FOCS~'15)}, pages 79--97, 2015.

\bibitem[Buchin et~al.(2019)Buchin, Driemel, and Struijs]{BDS19}
K.~Buchin, A.~Driemel, and M.~Struijs.
\newblock On the hardness of computing an average curve.
\newblock \emph{CoRR}, abs/1902.08053, 2019.
\newblock Preprint appeared at the \emph{35th European Workshop on
  Computational Geometry (EuroCG~'19)}.

\bibitem[Bulteau et~al.(2014)Bulteau, H{\"{u}}ffner, Komusiewicz, and
  Niedermeier]{BHKN14}
L.~Bulteau, F.~H{\"{u}}ffner, C.~Komusiewicz, and R.~Niedermeier.
\newblock Multivariate algorithmics for {NP}-hard string problems.
\newblock \emph{Bulletin of the {EATCS}}, 114, 2014.

\bibitem[Bulteau et~al.(2018)Bulteau, Froese, and Niedermeier]{BFN18}
L.~Bulteau, V.~Froese, and R.~Niedermeier.
\newblock Tight hardness results for consensus problems on circular strings and
  time series.
\newblock \emph{CoRR}, abs/1804.02854, 2018.

\bibitem[Chen and Wang(2011)]{CW11}
Z.~Chen and L.~Wang.
\newblock Fast exact algorithms for the closest string and substring problems
  with application to the planted ($\ell$, $d$)-motif model.
\newblock \emph{{IEEE/ACM} Transactions Computational Biology and
  Bioinformatics}, 8\penalty0 (5):\penalty0 1400--1410, 2011.

\bibitem[Chen et~al.(2012)Chen, Ma, and Wang]{CMW12}
Z.~Chen, B.~Ma, and L.~Wang.
\newblock A three-string approach to the closest string problem.
\newblock \emph{Journal of Computer and System Sciences}, 78\penalty0
  (1):\penalty0 164--178, 2012.

\bibitem[Chen et~al.(2016)Chen, Ma, and Wang]{CMW16}
Z.~Chen, B.~Ma, and L.~Wang.
\newblock Randomized fixed-parameter algorithms for the closest string problem.
\newblock \emph{Algorithmica}, 74\penalty0 (1):\penalty0 466--484, 2016.

\bibitem[Cook et~al.(2013)Cook, Crandall, Thomas, and Krishnan]{CCTK13}
D.~Cook, A.~Crandall, B.~Thomas, and N.~Krishnan.
\newblock Casas: A smart home in a box.
\newblock \emph{Computer}, 46, 2013.

\bibitem[Frances and Litman(1997)]{FL97}
M.~Frances and A.~Litman.
\newblock On covering problems of codes.
\newblock \emph{Theory of Computing Systems}, 30\penalty0 (2):\penalty0
  113--119, 1997.

\bibitem[Gold and Sharir(2018)]{GS18}
O.~Gold and M.~Sharir.
\newblock Dynamic time warping and geometric edit distance: Breaking the
  quadratic barrier.
\newblock \emph{{ACM} Transactions on Algorithms}, 14\penalty0 (4):\penalty0
  50:1--50:17, 2018.

\bibitem[Gramm et~al.(2003)Gramm, Niedermeier, and Rossmanith]{GNR03}
J.~Gramm, R.~Niedermeier, and P.~Rossmanith.
\newblock Fixed-parameter algorithms for closest string and related problems.
\newblock \emph{Algorithmica}, 37\penalty0 (1):\penalty0 25--42, 2003.

\bibitem[Kuszmaul(2019)]{Kuszmaul19}
W.~Kuszmaul.
\newblock Dynamic time warping in strongly subquadratic time: Algorithms for
  the low-distance regime and approximate evaluation.
\newblock In \emph{Proceedings of the 46th International Colloquium on
  Automata, Languages, and Programming (ICALP~'19)}, pages 80:1--80:15, 2019.

\bibitem[Li et~al.(2002)Li, Ma, and Wang]{LMW02}
M.~Li, B.~Ma, and L.~Wang.
\newblock On the closest string and substring problems.
\newblock \emph{Journal of the {ACM}}, 49\penalty0 (2):\penalty0 157--171,
  2002.

\bibitem[Mueen et~al.(2016)Mueen, Chavoshi, Abu-El-Rub, Hamooni, and
  Minnich]{MCAHM16}
A.~Mueen, N.~Chavoshi, N.~Abu-El-Rub, H.~Hamooni, and A.~Minnich.
\newblock {AWarp}: Fast warping distance for sparse time series.
\newblock In \emph{2016 IEEE 16th International Conference on Data Mining
  (ICDM~'16)}, pages 350--359, 2016.

\bibitem[Nishimura and Simjour(2012)]{NS12}
N.~Nishimura and N.~Simjour.
\newblock Enumerating neighbour and closest strings.
\newblock In \emph{7th International Symposium on Parameterized and Exact
  Computation, ({IPEC} '12)}, pages 252--263. Springer, 2012.

\bibitem[Sakoe and Chiba(1978)]{SC78}
H.~Sakoe and S.~Chiba.
\newblock Dynamic programming algorithm optimization for spoken word
  recognition.
\newblock \emph{IEEE Transactions on Acoustics, Speech, and Signal Processing},
  26\penalty0 (1):\penalty0 43--49, 1978.

\bibitem[Sharabiani et~al.(2018)Sharabiani, Darabi, Harford, Douzali, Karim,
  Johnson, and Chen]{SDHDKJC18}
A.~Sharabiani, H.~Darabi, S.~Harford, E.~Douzali, F.~Karim, H.~Johnson, and
  S.~Chen.
\newblock Asymptotic dynamic time warping calculation with utilizing value
  repetition.
\newblock \emph{Knowledge and Information Systems}, 57\penalty0 (2):\penalty0
  359--388, 2018.

\bibitem[Yuasa et~al.(2019)Yuasa, Chen, Ma, and Wang]{YCMW19}
S.~Yuasa, Z.~Chen, B.~Ma, and L.~Wang.
\newblock Designing and implementing algorithms for the closest string problem.
\newblock \emph{Theoretical Computer Science}, 786:\penalty0 32--43, 2019.

\end{thebibliography}

\appendix

\end{document}